\newif\ifllncs
\ifllncs
  \documentclass[envcountsame,envcountsect]{llncs}
\else
  \documentclass[11pt]{article}
  \usepackage[margin=1in]{geometry}
  \usepackage{amsthm}
\fi

\newif\ifshowtoc
\ifllncs\showtocfalse\else\showtoctrue\fi

\usepackage{amsmath,amssymb,mathtools}
\usepackage{algpseudocode}
\usepackage{enumitem}
\usepackage{microtype}
\usepackage[hidelinks]{hyperref}
\usepackage[dvipsnames]{xcolor}
\usepackage{mdframed}
\usepackage{aliascnt}
\usepackage[capitalize,noabbrev,nameinlink]{cleveref}

\ifllncs\else
  \newtheorem{theorem}{Theorem}[section]
  \newaliascnt{lemma}{theorem}
  \newtheorem{lemma}[lemma]{Lemma}
  \aliascntresetthe{lemma}
  \newaliascnt{corollary}{theorem}
  \newtheorem{corollary}[corollary]{Corollary}
  \aliascntresetthe{corollary}
  \theoremstyle{definition}
  \newaliascnt{definition}{theorem}
  \newtheorem{definition}[definition]{Definition}
  \aliascntresetthe{definition}
  \newaliascnt{remark}{theorem}
  \newtheorem{remark}[remark]{Remark}
  \aliascntresetthe{remark}
\fi
\newaliascnt{assumption}{theorem}
\newtheorem{assumption}[assumption]{Assumption}
\aliascntresetthe{assumption}

\crefname{theorem}{Theorem}{Theorems}
\crefname{lemma}{Lemma}{Lemmas}
\crefname{corollary}{Corollary}{Corollaries}
\crefname{definition}{Definition}{Definitions}
\crefname{assumption}{Assumption}{Assumptions}
\crefname{remark}{Remark}{Remarks}
\crefname{equation}{Equation}{Equations}
\crefname{section}{Section}{Sections}
\Crefname{theorem}{Theorem}{Theorems}
\Crefname{lemma}{Lemma}{Lemmas}
\Crefname{corollary}{Corollary}{Corollaries}
\Crefname{definition}{Definition}{Definitions}
\Crefname{assumption}{Assumption}{Assumptions}
\Crefname{remark}{Remark}{Remarks}
\Crefname{equation}{Equation}{Equations}
\Crefname{section}{Section}{Sections}

\newcommand{\players}{\mathcal{P}}
\newcommand{\honest}{\mathsf{honest}}
\newcommand{\rational}{\mathsf{rational}}
\newcommand{\Attack}{\mathsf{Attack}}
\newcommand{\acc}{\mathsf{acc}}
\newcommand{\rej}{\mathsf{rej}}

\newcommand{\ind}{\mathbf{1}}

\newcommand{\cnote}[1]{{\color{brown}{Chen-Da: #1}}}
\newcommand{\hao}[1]{{\color{blue}{Hao: #1}}}
\renewcommand{\cnote}[1]{}
\renewcommand{\hao}[1]{}

\title{Too Late to Slash: Coordinating a Risk-Free Equivocation Attack}
\ifllncs
\author{}
\institute{}
\else
  \author{
    Hao Chung\\
    {\small Layerzero Labs}\\
    {\small\texttt{hao.chung@layerzerolabs.org}}
    \and
    Chen-Da Liu-Zhang\\
    {\small Lucerne University of Applied Sciences and Arts}\\
    {\small\texttt{chen-da.liuzhang@hslu.ch}}
  }
  \date{}
\fi

\begin{document}

\maketitle


\begin{abstract}
\emph{Slashing} is commonly argued to secure proof-of-stake blockchains
by confiscating the stake of misbehaving validators.
The usual justification is that, without slashing,
validators can solicit an equivocation attack by signing conflicting blocks:
if enough others join, the attack succeeds;
otherwise, the attempt incurs no loss.
Slashing is intended to make such attempts costly and thereby guarantee the security of applications whose economic value is comparable to the bonded stake.
This rationale, however, rests on heuristic arguments
rather than a formal game-theoretic guarantee.

We challenge this rationale by constructing a risk-free coordination protocol
for rational validators under algorithmic slashing.
We show that the prescribed strategy profile,
in which rational validators solicit other validators to equivocate,
constitutes an ex post Nash equilibrium,
even when validators do not know in advance how many others will participate.
The equilibrium holds for any gain $\epsilon>0$ from successful equivocation,
however small relative to the bonded stake.
Thus, slashing alone does not guarantee economic security
proportional to the value of bonded stake.

Our results expose a fundamental limitation of algorithmic slashing.
Whereas conventional collateral arrangements in many real-world scenarios can rely on external enforcement,
for example through courts,
algorithmic slashing depends on the same consensus process that the attackers control.
These findings call for a formal analysis of slashing's security,
rather than overly simplistic arguments drawn from financial systems
with independent enforcement.
\end{abstract}

\ifshowtoc
  \newpage
  \tableofcontents
  \newpage
\fi

%
%

\section{Introduction}

A blockchain is maintained by a committee of validators running a consensus
protocol, so that everyone who watches it sees the same ever-growing history.
Its guarantees hold as long as the total weight of faulty validators stays below
a fixed threshold. In proof of stake that weight is bonded stake, so the
guarantees hinge on how much stake is in faulty hands.

Real validators are not simply honest or faulty. They deviate when deviating
pays. The standard remedy is \emph{slashing}: a validator caught misbehaving
forfeits part or all of its bond. Slashing comes in two kinds
\cite{kannan2023slashing}. \emph{Algorithmic} slashing turns objectively
verifiable evidence into a protocol-defined penalty, with no human in the loop.
\emph{Social} slashing relies on authority outside the protocol: governance
votes, a recovery fork, a court. This paper is about the first kind, and the
difference will turn out to matter a great deal.

How is slashing supposed to work? The textbook argument goes like this. A
validator who wants to double-spend cannot do it alone; it needs enough other
stake to join. Without slashing, finding out whether it will is free: sign two
conflicting blocks and see who follows. If enough follow, the attack succeeds.
If not, nothing happens and nothing is lost. Slashing kills this free option.
The two signatures are public evidence against their author, and the stake is
gone.\footnote{Ethereum's documentation puts the cost at the staked supply times the token price \cite{ethereum_pos_vs_pow,eth2book-staking}. With roughly a hundred billion dollars staked \cite{coinbase_eth_staking}, Ethereum is said to carry tens of billions in dependent activity.} To learn whether others would join, a validator must first convict itself. This is how slashing is said to solve the nothing-at-stake problem
\cite{buterin2017casper,xiao2020survey}.

At heart this is a prediction about behavior: rational validators will not go
looking for accomplices, because looking is punishable. More precisely, it is a
claim about a \emph{coordination game}, in which validators try to discover one
another's willingness to attack: that discovery costs stake. If discovery is
free, slashing has nothing to punish until the coalition already exists.

\begin{quote}
\emph{Can rational validators find out whether enough others are willing to attack? Moreover, are they willing to try to find out even if the bonded stake is large and the attack-reward is small?}
\end{quote}


\subsection{Our Contribution}

The answer is yes, on both counts. We exhibit an attack in which rational
validators are incentivized to find out whether enough others are willing to join, irrespective of the attack-reward and the stake bond sizes. We can therefore conclude that once coordination is taken into account, the security of an application on the chain is not proportional to the stake behind it.

The reason such an attack exists is that, under algorithmic slashing, the party
that enforces the penalty and the party that is slashed are the same. A slashing
penalty is a transaction and it must be included in a block that must be finalized by the very validators the penalty is aimed at. Collateral works
in the ordinary economy because the bank, not the borrower, seizes it. Here the enforcer is
the defendant. So the punishment is credible in precisely the regime where it is
not needed: before a coalition capable of attacking exists there is nothing to
slash, and once it exists, it controls the process that would do the slashing.
It is \textbf{too late to slash}. This is also the difference from social
slashing, which brings in an enforcer the attackers do not control.


\paragraph{A model of blockchains with slashing.}
We model algorithmic slashing by making its dependence on consensus explicit
(\cref{sec:slashing-model}):
a penalty takes effect only when its transaction executes in a finalized block.
We define the \emph{monopoly threshold}
(\cref{def:monopoly-thresholds}) as the minimum coalition weight
sufficient for a single strategy to both finalize any pair of individually valid,
conflicting blocks and selectively include or exclude transactions.
This threshold captures the power needed to equivocate
while preventing slashing penalties from taking effect.
The coalition uses only ordinary protocol actions,
without breaking cryptographic primitives.

\paragraph{Risk-free coordination.}
Our first result focuses on the \emph{standard double-signing rule}:
a validator is subject to slashing if it signs two conflicting blocks
in the consensus.
The pair of signatures serves as publicly verifiable evidence of the violation.

Our strategy separates solicitation from equivocation.
Validators first register their willingness to attack through a public smart contract
(\cref{sec:coordination}).
Registration is risk-free:
validators sign only a statement of intent,
which provides no evidence of double signing.
After a public deadline, the contract checks whether the enrolled weight
reaches the monopoly threshold.
Below the threshold, coordination aborts without producing slashing evidence;
at or above it, participants equivocate and censor slashing transactions
until they withdraw their full stake.
Thus, evidence appears only once the coalition has enough power to block punishment.

\begin{theorem}[Risk-free attack equilibrium; informal, see
\cref{thm:risk-free}]
Under the standard double-signing rule,
the profile in which every rational validator follows the strategy above
forms a Nash equilibrium,
even if validators do not know in advance which others are willing to participate.
\end{theorem}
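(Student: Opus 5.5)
The plan is an ex post argument. We fix an arbitrary realization of types, meaning which validators are rational (willing to attack) and which are honest. We then show that no rational validator can strictly gain by a unilateral deviation, given that all other rational validators follow the prescribed strategy: register, wait for the deadline, then equivocate and censor slashing transactions if enrolled weight reaches the monopoly threshold, and otherwise abort. Since the argument never uses beliefs about the realization, the equilibrium holds without prior knowledge of who participates. We split on whether the weight $W$ of rational validators reaches the monopoly threshold of \cref{def:monopoly-thresholds}. For a deviator $i$ we also consider the weight $W - w_i$ of the others, since $i$'s own registration may be pivotal.

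\textbf{Case $W$ below threshold.} Following the strategy, the contract aborts, no conflicting signatures are ever produced, and $i$ keeps its full stake with utility equal to the honest baseline. We must show every deviation yields at most this. There are three kinds of deviation. Not registering changes nothing. Signing conflicting blocks unilaterally, or together with the others who will abort, cannot produce a successful equivocation, because the attacking coalition lacks the weight needed to finalize conflicting blocks. It does, however, produce public double-signing evidence, which honest validators, holding more than the complementary weight, include in a finalized block, so the stake is slashed. Any other deviation, such as misreporting to the contract, does not alter the others' abort behavior, because the contract outcome is a deterministic public function of on-chain registrations. So utility is at most baseline.

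\textbf{Case $W$ at or above threshold.} Following the strategy, the coalition finalizes the conflicting blocks and gains $\epsilon>0$. Using the selective inclusion/exclusion power in the threshold definition, it censors every slashing transaction until all members have executed withdrawals. The key lemma here is that each follower ends with its full stake plus its share of $\epsilon$, so its utility strictly exceeds baseline. Deviations fall into two groups. Abstaining or withdrawing registration either forfeits the gain or, if $i$ is pivotal, triggers abort and yields the baseline. Defecting after the attack, for instance by refusing to censor or by trying to slash others for a reward, is where care is needed. If $W - w_i$ is still at or above threshold, the remaining coalition censors alone, and $i$'s own evidence also sits in censored territory, so there is no gain. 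If $i$ is pivotal, defecting breaks censorship. But $i$ has itself signed conflicting blocks, so its own stake becomes slashable too, and the prescribed strategy must be specified so that $i$'s evidence is produced no later than the coalition's. A natural choice is that withdrawals execute before any member's signature becomes includable by a non-coalition block.

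The main obstacle is this second sub-case: pivotal post-attack defection, such as a whistleblower reward or preserving the value of stake rather than the $\epsilon$ share. I would first handle it by modeling slashing rewards as bounded by, and paid out of, slashed stake. The pivotal defector then loses its own stake, which is at least as large as the reward it collects. I would also sequence the protocol so that withdrawal finality for all members precedes the point at which any conflicting signature is broadcast outside the coalition. If needed, I would invoke the model's assumption from \cref{sec:slashing-model} that penalties take effect only in finalized blocks, which the coalition controls until the withdrawal delay elapses. Combining the cases gives that no unilateral deviation is profitable in any realization, which establishes the ex post Nash equilibrium of \cref{thm:risk-free}.
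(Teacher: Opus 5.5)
Your proposal has a genuine gap in the above-threshold case. You name pivotal post-attack defection (whistleblower rewards, preserving stake value) as the main obstacle and try to resolve it by changing the model and the protocol. In the game of \cref{thm:risk-free} there is nothing to resolve. The basic game has no reporting rewards, and token-value effects are explicitly outside the model, so by \cref{eq:utility} every execution satisfies $u_i(\omega)\le\epsilon$: the only positive term is $\epsilon\ind\{\Attack_i(\omega)\}$. Your ``key lemma'' gives $U_i(s^{\mathsf{atk}};t)=\epsilon$ when $W(R(t))\ge q_{\mathrm{coord}}$. Once you have that, every unilateral deviation is dominated immediately, pivotal or not, randomized or not. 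This one-line cap is the paper's entire argument for that case, and it is the idea your proposal is missing.

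The fixes you offer instead would fail even in an extended model. Rewards for reporting $j$ are paid out of $j$'s stake, not the reporter's. A pivotal defector who reports several members can therefore collect up to $\sum_{j\ne i}b_j$, and nothing bounds this by $s_i$; nor is $i$'s own slashing guaranteed. This is exactly why \cref{sec:evidence-rewards} needs separate collateral $D_i\ge\sum_{j\ne i}b_j$. Re-sequencing so that withdrawals are final before any conflicting signature leaves the coalition also fails. First, it changes the profile the theorem is about. Second, stake is released only at the episode boundary, while the conflicting certificates must exist within the episode. So a defecting insider who sees them can leak evidence before any withdrawal completes.

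Your below-threshold case reaches the right conclusion by a wrong route. Weight below $q_{\mathrm{coord}}$ does not preclude finalizing conflicting blocks. The monopoly threshold demands equivocation \emph{and} censorship, and it may exceed what is needed to violate safety alone. What actually excludes a gain is that $\Attack_i$ requires a finalized $\mathsf{activate}$ outcome with $i\in C$. Under any unilateral deviation, $C\subseteq R(t)$, because honest validators never register and registration signatures cannot be forged. Hence $W(C)\le W(R(t))<q_{\mathrm{coord}}$, the contract never activates, and \cref{eq:utility} caps the deviator at zero. Your further claim that honest validators will finalize the lone equivocator's slashing is neither guaranteed by the model nor needed.

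Likewise, in your non-pivotal subcase the remaining members keep playing $S_{C,j}$. That strategy witnesses the monopoly property for $C$, not for $C\setminus\{i\}$, so ``the remaining coalition censors alone'' is unjustified. Again, the $\epsilon$ cap makes the point moot.
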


Notably, the equilibrium holds for any positive gain from successful equivocation,
however small relative to the bonded stake. 
Moreover, we also show that honest play is not always a best response. 
Each rational validator prefers joining rather than remaining honest,
provided the other rational validators follow the coordination strategy
(\cref{cor:join-incentive}). Joining yields a positive gain without losing stake,
whereas remaining honest yields no gain.
 
\paragraph{Robustness to evidence rewards.}
A natural objection is that the baseline attack works only because
no one is rewarded for reporting equivocation.
Offering such rewards could unravel the coalition from within.
In \cref{sec:evidence-rewards},
we require each participant to post a deposit
that is forfeited whenever any coalition member is slashed.
If the deposit covers the total rewards its owner could collect by reporting others,
betrayal yields no net gain.
This preserves both the attack equilibrium
(\cref{thm:reward-equilibrium})
and the strict incentive to join
(\cref{cor:reward-join-incentive}).

\paragraph{Beyond the standard rule.}
A broader slashing rule might punish deploying a coordination contract
or registering an intention to equivocate.
In \cref{sec:anon-collateral},
we consider any slashing rule that never accepts evidence against an honest validator.

We construct a coordination protocol with anonymous enrollment.
The available information is therefore consistent with a target validator
being honest.
Attributing enrollment to that validator could then implicate an honest validator,
violating no slashing against honest stake.
With sufficient collateral to make duplicate enrollment unprofitable,
we obtain a risk-free attack equilibrium 
(\cref{thm:anonymous-deposit-equilibrium}) 
and the strict incentive to join
(\cref{cor:anonymous-join-incentive}).
 

\paragraph{Impact of our results.}
Our results challenge the security rationale for algorithmic slashing:
rational validators can solicit an attack without risking their stake,
and successful equivocation can remain profitable
however small its gain relative to the bonded stake.
Slashing cannot deter sophisticated attackers who can block their own punishment.

Our aim is not to develop a deployable attack.
Our construction is idealized and requires a coalition at the monopoly threshold,
with greater consensus power than needed to violate safety alone.
In practice, slashing may still be useful for honest but careless participants:
operational errors, such as running multiple validator clients with the same key,
are a common cause of slashing \cite{saintolive2024slashing}.
These roles, however, do not establish security against sophisticated attacks
by coordinated rational validators.

Finally, our scope is algorithmic slashing.
Social slashing draws on external authority
and may provide the only effective recourse after such an attack.
That protection comes from enforcement outside the consensus protocol.
Claims of economic security must therefore rest on a formal analysis
of incentives and enforcement,
rather than on the amount of stake nominally at risk.
 
\subsection{Related Work}

\paragraph{Rational participants and accountability.}
The inadequacy of a purely honest-versus-faulty taxonomy is long recognized. The
BAR model \cite{aiyer2005bar} and work at the interface of distributed computing
and game theory \cite{abraham2006distributed,halpern2008beyond} study protocols
that are equilibria rather than merely fault tolerant. 
Ranchal-Pedrosa and Gramoli \cite{ranchal2022trap}
show that consensus can exceed the classical one-third fault threshold
when some potentially faulty participants are rational rather than Byzantine.
Our validators are rational
in the same sense, but the question differs. That line of work asks whether
rational parties would deviate from a given protocol. We ask whether they would
first coordinate: whether they are incentivized to find out that enough of them
are willing to deviate before any of them does.

A separate line of work asks who is provably at fault. Casper FFG
\cite{buterin2017casper} introduced accountable safety; later work extracts the
maximal identifiable culprit set from conflicting executions
\cite{sheng2021bft,civit2021polygraph} and characterizes the tradeoff against
availability \cite{neu2021availability}. These works assume a Byzantine
adversary, so the question of whether a coalition would ever assemble does not
arise, and they stop at identification: whether the identified fault can then
be punished is not asked.

\paragraph{Economic security of consensus.}
Budish, Lewis-Pye, and Roughgarden \cite{budish2024limits}
study the fault-tolerance thresholds of consensus protocols
and the slashing guarantees attainable under different network and permission models.
Their impossibility result under partial synchrony identifies a related limitation:
an adversary can exploit network delays to violate consistency
and withdraw its stake before punishment.
Their analysis starts with an already formed adversarial coalition
that knows the total consensus weight it controls.
We instead study how rational validators coordinate to form an attacking coalition,
without knowing in advance who will participate.
Moreover, our attack does not rely on a network partition.

Kannan and Deb \cite{kannan2023slashing} argue that slashing raises
the cost of a bribery attack to at least one third of the total stake
in their BFT model.
This bound assumes that double-signing validators lose their deposits
on a canonical fork.
Deb, Raynor, and Kannan \cite{deb2024stakesure}
develop this framework further in STAKESURE,
refining bounds on attack profits and using slashed stake to compensate insured users.
Both works also consider \emph{token toxicity}:
a successful attack may reduce the value of the staked token
and thereby impose losses on the attackers.
Our model does not capture token toxicity as a potential deterrent, 
and we examine algorithmic enforcement:
a coalition reaching the monopoly threshold can censor slashing transactions
and withdraw its stake.

\paragraph{Bribery and coordination.}
Coordinating an attack and bribery through a smart contract is not new. 
Bribery attacks on proof-of-work were formalized in \cite{bonneau2016buy,mccorry2018bribing}, and
the ``Pay-to-Win'' framework \cite{judmayer2021paytowin} builds crowdfunded
contracts that pay participants only if the attack succeeds. Our mechanism
differs in three ways: it transfers no funds, it targets proof of stake
\emph{with slashing}, and it does not buy willingness but \emph{reveals}
willingness that is already there.

Karakostas, Kiayias, and Zacharias
\cite{bribing2024counterincentives}
analyze bribery attack in proof-of-stake blockchains, and they show that slashing
can mitigate certain bribery attacks.
In their model, a validator's stake is guaranteed to be slashed if a provable evidence is presented.
Our model makes the execution of that penalty part of the game:
even when valid evidence exists.

Yeo and Zhang \cite{yeo2025censorship}
study how rational validators coordinate an attack
through a public bulletin board in the form of a smart contract to censor transactions selectively.
Our work focuses instead on formalizing whether slashing can be useful in preventing attack coordination. As such, our game-theoretic formalizations are tailored to a model for blockchain with slashing, and our attack aims to allow rational parties to coordinate in producing a double-signing evidence to profit from double spending without being punished.


Zhang et al. \cite{zhang2023collusion} show that an external bribery contract, funded by an entity denoted the \emph{magnate} on a separate chain, can make honest parties weakly dominated and collude in a strict Nash equilibrium in any incentive-based blockchain. We instead coordinate on the attacked chain itself without any bribe, and formalize when a coalition can equivocate without ever being slashed.


\section{Modeling Blockchains with Slashing}
\label{sec:slashing-model}

\subsection{Blockchains with Slashing}
We consider a proof-of-stake blockchain protocol
$\Pi_{\mathsf{bc}}$ run by a committee $\players=\{1,\ldots,n\}$.
Let $\mathsf{acct}_i=(\mathsf{pk}_i,w_i,s_i,\mathsf{aux}_i)$ be
validator $i$'s account state containing its public
key $\mathsf{pk}_i$, consensus weight $w_i>0$, stake balance $s_i\geq 0$
subject to slashing, and auxiliary information
$\mathsf{aux}_i$.\footnote{The auxiliary information includes any metadata,
such as replay-protection data used later.}
We assume $s_i>0$ at the beginning of the modeled episode. 
Let $\mathcal V_{\mathsf{bc}}=\{\mathsf{acct}_i\}_{i\in\players}$ be the
public registry with normalized weights $\sum_{i\in\players}w_i=1$.  
We assume $\mathcal V_{\mathsf{bc}}=\{\mathsf{acct}_i\}_{i\in\players}$ is a public observable information
stored on the blockchain.
Although $w_i$ is often proportional to $s_i$ in proof-of-stake protocols, 
our analysis does not require this relation.

For simplicity, we assume that the committee $\players$ and its associated
consensus weights remain fixed throughout the \emph{episode},
i.e., a sequence of blocks in which the committee is updated only at episode boundaries.
Stake balances may change within an episode, with
the corresponding changes in consensus weight taking effect in the next
episode.  The later coordination game ends after determining whether the
attack succeeds and which slashing actions execute.  
Then, utilities are evaluated.
Thus, removing a slashed validator immediately or at the next episode
boundary does not affect any game-theoretic analysis in that episode.


We leave the network model abstract.  The definitions may be instantiated in
synchronous, partially synchronous, or asynchronous settings.
Our results apply to any such protocol instantiation satisfying the stated
properties, although the resulting colluding thresholds may depend on the network model.


For a block height $h$, let $\mathsf{ctx}_h$ denote any public protocol context,
including the committee information, and previously finalized state up to but not including height $h$.
A symmetric relation $\perp_h$ identifies conflicts: $B\perp_h B'$ means
that $B$ and $B'$ are each individually valid relative to $\mathsf{ctx}_h$
but cannot both occur in one valid blockchain history.

We model finality through the deterministic public algorithm
\[
  \mathsf{VfyFinal}
    (\mathsf{ctx}_h,B,\varphi)\in\{\acc,\rej\}
\]
which verifies whether $\varphi$ is a valid finality certificate for $B$
under $\mathsf{ctx}_h$.  We say that an execution finalizes $B$ relative to
$\mathsf{ctx}_h$ if it produces a certificate $\varphi$ for which
$\mathsf{VfyFinal}(\mathsf{ctx}_h,B,\varphi)=\acc$.  In BFT-style consensus,
$\varphi$ may be a quorum certificate comprising the committee members' signatures.
Thus conflicting finality is publicly verifiable even when observers receive the two certificates at different times, and it remains meaningful
when every validator is corrupted.

Because the equivocation attacks considered here may produce conflicting
finalized histories, the blockchain state need not be globally unique.
Accordingly, blockchain states, validator accounts and balances, and
statements that a validator is slashed are always relative to a specified
finalized blockchain history.  
A validator may therefore be slashed in one finalized history but not in another.


\begin{definition}[slashing mechanism]
A slashing mechanism for $\Pi_{\mathsf{bc}}$ is a tuple
  $\bigl(\mathsf{GenEv},\mathsf{VfyEv}, \mathsf{ApplySlash}\bigr)$.
\begin{itemize}
\item 
$\mathsf{GenEv}$ is a possibly interactive protocol among validators that,
given a blockchain context $\mathsf{ctx}_h$, may output
$(\mathsf{ctx}_h,i,\mathsf{ev})$, where $\mathsf{ev}$ is an
evidence of validator $i$'s misbehavior with respect to $\mathsf{ctx}_h$.
\item 
$\mathsf{VfyEv}(\mathsf{ctx}_h,i,\mathsf{ev})$ is a deterministic algorithm
that checks whether $\mathsf{ev}$ demonstrates misbehavior by validator $i$.
It outputs $\acc$ if verification succeeds and $\rej$ otherwise.
\item 
$\mathsf{ApplySlash}(\mathsf{ctx}_h,i,\mathsf{ev})$ deterministically computes
the following transition on validator $i$'s current account:
\[
  \mathsf{ApplySlash}(\mathsf{ctx}_h,i,\mathsf{ev}):
  (\mathsf{pk}_i,w_i,s_i,\mathsf{aux}_i)
  \longmapsto
  (\mathsf{pk}_i,w_i,s'_i,\mathsf{aux}'_i),
\]
such that $s'_i\geq 0$ and $s'_i\leq s_i$.  
If $\mathsf{VfyEv}(\mathsf{ctx}_h,i,\mathsf{ev})=\rej$, or if it accepts
evidence for a violation that has already been processed, then the transition
leaves the account unchanged.
If $s'_i<s_i$, then $s_i-s'_i$ is the amount slashed.  
The transition uses the auxiliary field to recognize previously
processed violations, including when the same violation is presented through
a different encoding accepted by $\mathsf{VfyEv}$,
which is critical to avoid double slashing for the same violation.
\end{itemize}
\end{definition}

$\mathsf{ApplySlash}$ only computes an account transition.  
The transition becomes effective only when a transaction invoking it is executed
as \emph{part of a finalized block}.  
Relative to any finalized ledger history, every observer can
objectively determine the resulting account and how much stake is confiscated.
Note that the confiscated amount may depend on the severity of the evidence.
Some minor mistake may only lead to a minor fractional slashing,
while some major violation, e.g., equivocation, may lead to a slashing of whole balance $s_i$.

In most current blockchains, evidence of equivocation is simply a pair of
signatures showing that a validator signed two conflicting blocks, so its
generation requires no interaction among honest validators.  We nevertheless
allow $\mathsf{GenEv}$ to be interactive to keep the definition general.

\subsection{Monopoly Threshold}


One threshold induced by the blockchain protocol is critical to our analysis.
Intuitively, our game-theoretic analysis relies on the coalition's ability to
control enough stake in the underlying consensus
such that the coalition can unilaterally finalize two conflicting blocks,
i.e., equivocating,
and can selectively include or exclude transactions from the finalized blockchain,
thereby excluding any slashing transaction from being finalized.

The monopoly threshold is intended to measure the consensus power conferred
by coalition weight.
Without restricting its strategy, even a coalition of
tiny weight might compromise the protocol by, for example, forging
signatures.  We therefore parameterize the threshold by an admissible strategy
space $\mathcal{S}$, e.g., polynomial-time strategies.
In the next section, we will show that even if the adversary only has access to a further restricted strategy space, 
equivocating two conflicting blocks still forms a Nash equilibrium.

Formally, given a coalition $X\subseteq\players$, its total weight is $W(X)=\sum_{i\in X}w_i$.
Let $\mathcal W=\{W(X):X\subseteq\players\}$ be the finite set of possible coalition weights.  

\begin{definition}[Monopoly threshold]
\label{def:monopoly-thresholds}
Fix a set $\mathcal{S}$ of strategies.
The \emph{monopoly threshold} $q_M$ with respect to $\mathcal{S}$ is the smallest $q\in\mathcal W$ such that,
for every height $h$, every protocol context $\mathsf{ctx}_h$ at height $h$, and every coalition $M\subseteq\players$ with $W(M) \geq q$,
there exists a strategy $S \in \mathcal{S}$ for $M$ such that the following
properties hold with probability one.
\begin{itemize}
  \item \textbf{Equivocation:}
  Given every pair $B,B'$ satisfying $B\perp_h B'$,
  $S$ produces $\varphi,\varphi'$ such that
  \[
  \mathsf{VfyFinal}(\mathsf{ctx}_h,B,\varphi)
  =\mathsf{VfyFinal}(\mathsf{ctx}_h,B',\varphi')=\acc.
  \]
  \item \textbf{Censorship:}
  Given any two disjoint finite sets
$\mathcal T_{\mathsf{in}}$ and $\mathcal T_{\mathsf{out}}$ of transactions made
available to all honest validators at any time during the modeled episode,
every transaction in $\mathcal T_{\mathsf{in}}$ is included
in some block finalized from $\mathsf{ctx}_h$ onward on every branch the coalition finalizes, and no block finalized
from $\mathsf{ctx}_h$ onward contains a transaction in $\mathcal T_{\mathsf{out}}$
during the modeled episode.
\end{itemize}
\end{definition}

If the coalition controls the entire committee, i.e., $M = \players$, then it satisfies both the equivocation and censorship properties
for any natural blockchain protocol.
In such case, the monopoly threshold exists.


\begin{remark}[Computational monopoly threshold]
\label{rem:computational-monopoly-threshold}
\Cref{def:monopoly-thresholds} requires a single admissible strategy
to satisfy equivocation and censorship jointly with probability one.
When the admissible strategies are polynomial-time,
we may instead require these properties to hold jointly with probability
at least $1-\eta_M(\lambda)$,
where $\eta_M$ is negligible in the security parameter $\lambda$.
All other requirements and quantifiers remain unchanged.
We call the resulting threshold the \emph{computational monopoly threshold}.

This relaxation accommodates protocols whose finality guarantees
have negligible error.
Its effect on the equilibrium guarantee of \cref{thm:risk-free}
is discussed in the equilibrium analysis.
\end{remark}

\section{The Basic Coordination Game}
\label{sec:basic-coordination-game}

\subsection{Rational Validators and Equivocation Slashing}

In this section, we consider a concrete slashing mechanism commonly used in
practice.  Given public protocol messages, $\mathsf{GenEv}$ outputs
$(\mathsf{ctx}_h,i,\mathsf{ev}_i)$, where $\mathsf{ev}_i$ consists of
validator $i$'s signatures on two conflicting blocks $B\perp_h B'$ in the
same context $\mathsf{ctx}_h$.
$\mathsf{VfyEv}(\mathsf{ctx}_h,i,\mathsf{ev}_i)$ accepts if the blocks are
valid in that context, they conflict, and both signatures verify under
$\mathsf{pk}_i$.  If the evidence is accepted and fresh,
$\mathsf{ApplySlash}$ sets $i$'s current stake balance to zero and records the
violation in the account's auxiliary field.  

Let $\Pi_{\mathsf{bc}}$ be a blockchain protocol with the 
slashing mechanism described above.  Fix committee
$\players=\{1,\ldots,n\}$, registry $\mathcal V_{\mathsf{bc}}$, initial
validator accounts $\mathsf{acct}_i=(\mathsf{pk}_i,w_i,s_i,\mathsf{aux}_i)$,
and coalition-weight function $W(\cdot)$ for the initial committee episode.
The committee and consensus weights remain fixed within that episode in each
blockchain history.  Their subsequent evolution follows the underlying
protocol.
We assume that at the beginning of the execution, no valid
evidence of an earlier violation is available against any validator.

At the beginning of the game, each validator learns the public parameters of the protocol,
and its own type $t_i$ which is either $\honest$ or $\rational$.
Each validator does not know the types of the other validators.
If a validator has type $\honest$, it always follows the protocol $\Pi_{\mathsf{bc}}$ and the prescribed slashing-enforcement routines.
In particular, it never registers with the coordination mechanism or 
produces conflicting authenticated consensus messages.
If a validator has type $\rational$, it will act to maximize its expected utility, potentially deviating from the protocol if it is beneficial.
Rational validators need not trust one another, and each makes decisions based on its own type and the observed state of the game.
Let $t=(t_1,\ldots,t_n)$ denote the type profile of all validators,
and define \[
  H(t)=\{i:t_i=\honest\},
  \qquad
  R(t)=\{i:t_i=\rational\}.
\]
Nature may choose any profile in $\{\honest,\rational\}^n$. 
There is no probability distribution or common prior over type profiles.

\paragraph{Admissible strategies.}
A rational validator $i\in\players$ may combine the following actions:
\begin{itemize}
  \item Deploy arbitrary smart contracts or submit arbitrary transactions to the blockchain.
  \item As a block proposer, omit a given transaction from its proposed block, or,
  in consensus, decline to vote for a block containing that transaction.
  Effectively, it attempts to censor specific transactions.
  \item Vote for any block in consensus under its public key $\mathsf{pk}_i$, potentially
  deviating from the prescribed protocol.
  Crucially, it may vote for two conflicting blocks, potentially forking the consensus.
  \item Any action prescribed by the honest protocol $\Pi_{\mathsf{bc}}$,
  including honest voting and block proposal in consensus, and the slashing of misbehaving validators.
\end{itemize}
Let $\mathcal S_\text{coord}$ be the admissible strategy space consisting of
probabilistic interactive algorithms built from these actions.  They act on
finite local observations.  At each interaction, computation is polynomial
in the security parameter and the length of the local history.  Validators
control only their own signing keys; strategies that break cryptographic
primitives, such as forging another validator's signature, are excluded.
Let $q_\text{coord}$ denote the monopoly threshold with respect to
$\mathcal S_\text{coord}$, as in
\cref{def:monopoly-thresholds}.

\subsection{Coalition Coordination}
\label{sec:coordination}

\paragraph{The coordination smart contract.}
The coordination mechanism is a smart contract with read access to the fixed
validator registry $\mathcal V_{\mathsf{bc}}$.  
The public registration deadline $\Delta$ is a block height within that episode.  
Let $\mathsf{reg}$ be an instance-specific message that expresses intent to enroll during this
registration period.
A deterministic verifier $\mathsf{VfyReg}(\mathsf{pk}_i,\mathsf{reg},\sigma_i)
  \in\{\acc,\rej\}$
checks whether $\sigma_i$ is a valid signature on $\mathsf{reg}$ under
$\mathsf{pk}_i$.

\begin{mdframed}
  \begin{center}
    \textbf{Coordination Smart Contract}
  \end{center}

\noindent Parameters: $\mathcal V_{\mathsf{bc}}$, $\Delta$, $\mathsf{reg}$, $q_\text{coord}$, $C\gets \emptyset$.

\begin{description}[font=\normalfont]
  \item[\textsc{Register}$(\mathsf{pk}_i,\sigma_i)$.]
  Accept if the call executes at a block height below $\Delta$, the key
  belongs to the registry, $i\notin C$, and
  \[
    \mathsf{VfyReg}(\mathsf{pk}_i,\mathsf{reg},\sigma_i)=\acc.
  \]
  On acceptance, set $C\gets C\cup\{i\}$.

  \item[\textsc{Close}.]
  Accept if the call executes at a block height at least $\Delta$, publish
  $\mathsf{abort}$ if $W(C)<q_\text{coord}$,
  and $\mathsf{activate}$ if $W(C)\ge q_\text{coord}$.
\end{description}
\end{mdframed}

The contract reads each weight from the registry rather than from the caller.
The function \textsc{Close} is permissionless and may be called any number of
times at block heights at least $\Delta$.  In each finalized blockchain
history, its first accepted \textsc{Close} call determines the canonical
outcome and enrollment set $C$.  A validator using the prescribed strategy
acts on the first such
finalized outcome it observes, in the protocol's local processing order, and
ignores later outcomes.
We assume that all validators observe the finalized
\textsc{Close} outcome immediately upon finalization.

In this section, the only evidence that leads to slashing is a validator's 
signatures on the two conflicting blocks.
Therefore, if a validator only signs registration messages and submits calls to the coordination contract, 
and otherwise follows $\Pi_{\mathsf{bc}}$, 
then no admissible algorithm given any finite portion of the execution
transcript can produce evidence accepted by $\mathsf{VfyEv}$ against the
validator.

\paragraph{Prescribed coordination strategy.}
The contract records enrollment and signals only whether coordination aborts
or activates.
Given a context $\mathsf{ctx}_h$, and a set of validators $C$,
let $\beta$ be the fixed deterministic function used to select conflicting blocks:
  \[
    \beta(\mathsf{ctx}_h,C)\coloneqq(B_C,B'_C),
    \qquad B_C\perp_h B'_C.
  \]
The complete prescribed coordination strategy for a rational validator $i$
is as follows.
\begin{enumerate}[leftmargin=2.4em]
  \item \textbf{Register.}
  At the start of the registration period, sign $\mathsf{reg}$ under
  $\mathsf{pk}_i$ and submit a \textsc{Register} call.
  Otherwise follow $\Pi_{\mathsf{bc}}$.
  \item \textbf{Close and observe.}
  At a block height at least $\Delta$, call \textsc{Close} if no finalized
  outcome is yet observed, then continue to follow $\Pi_{\mathsf{bc}}$
  while awaiting that outcome.
  If its status is $\mathsf{abort}$ or is $\mathsf{activate}$ but $i\notin C$,
  take no attack action and continue to follow                          
  $\Pi_{\mathsf{bc}}$, with no withdrawal requirement.
  The remaining steps apply only after observing a finalized $\mathsf{activate}$
  with $i\in C$.
  \item \textbf{Select the attack.}
  Let $h$ be the earliest block height following the finalized
  \textsc{Close} call at which the coordinated attack can begin.
  Individually compute the selected pair of conflicting blocks $(B_C,B'_C) = \beta(\mathsf{ctx}_h,C)$.

  For each activated $C$ and context $\mathsf{ctx}_h$,
  fix a strategy $S_C=(S_{C,i})_{i\in C}$ witnessing the monopoly property
  in \cref{def:monopoly-thresholds} for the selected pair $(B_C,B'_C)$.
  \item \textbf{Execute the continuation.}
  Following $S_{C,i}$, publish conflicting authenticated consensus messages
  supporting $B_C$ and $B'_C$.
  Request withdrawal in every finalized blockchain history in which
  $i$'s stake remains bonded, including both selected branches.
  Furthermore, exclude all transactions that would confiscate stake from
  any $j\in C$, and include all other transactions,
  including the coalition's withdrawal requests in particular.
  \item \textbf{Continue until withdrawal.}
  Continue consensus participation in each such history until $i$'s
  withdrawal takes effect, helping advance that history to the episode
  boundary while maintaining transaction inclusion and censorship.
\end{enumerate}

Registration is nonbinding.  Signing and submitting $\mathsf{reg}$ alone
creates no valid slashing evidence.  The contract locks no collateral, so an
abort itself causes no modeled loss.

\subsection{Coordination Game}
\label{sec:coordination-game}

\paragraph{Players and strategies.}
Set wall-clock time zero at
the start of the current committee episode and fix a public game deadline $T>0$.  
We represent play through time $T$ as a normal-form game of complete
strategies.  A strategy specifies a validator's behavior after every possible
sequence of its observations, including whether and when to register and how
to participate in the subsequent blockchain execution.  It may depend on the
validator's own type, received messages, and observed blockchain state, but
not on the unobserved types of other validators.  Thus, a strategy can respond
to later information; choosing it does not bind the validator to the
prescribed coordination protocol.

Let $\mathcal S_i^{\rational}$ contain all admissible complete strategies for
validator $i$, including randomized strategies, and let
$\mathcal S_i^{\honest}$ contain only its prescribed honest strategy. 
Write $s^{\mathsf{atk}}$ for the profile in which rational validators use the coordination strategy
described in \cref{sec:coordination} and honest validators follow $\Pi_{\mathsf{bc}}$.

\paragraph{How the game proceeds.}
At wall-clock time zero, the execution begins with a public blockchain
context containing the registry and the deployed coordination contract.
The game ends at time $T$, even if block production stalls or some histories
have not reached the episode boundary.  Message delivery, transaction
ordering, and consensus execution follow the underlying protocol and its
network model.

We assume that $T$ is sufficiently large to allow the protocol to reach the initial episode boundary under normal network conditions
if the blockchain's height progresses according to the protocol.
This timing assumption applies to both selected attack branches.  

At block heights below $\Delta$, validators may submit registration calls.
Finalization makes the resulting enrollment state observable in that blockchain history.  
At block heights at least $\Delta$, any validator may submit
\textsc{Close}.  Its finalized outcome reports the enrollment set $C$ and
either $\mathsf{abort}$, when
$W(C)<q_\text{coord}$, or $\mathsf{activate}$, when
$W(C)\ge q_\text{coord}$.  
After activation, enrollees may follow
$S_C=(S_{C,i})_{i\in C}$ or deviate.  Rational validators, whether enrolled
or not, retain their full admissible strategy space throughout the execution,
e.g., sending arbitrary transactions to the blockchain and participating in consensus according to their own strategies.

We assume that when all validators use $s^{\mathsf{atk}}$, every rational
validator's registration executes at a block height below $\Delta$, and all
these registrations and a canonical \textsc{Close} call finalize in a
common blockchain history.  Validators observe this outcome early enough
for the prescribed continuation, including the attack and withdrawal
requests before the episode boundary, and completion of withdrawal before
$T$ when the selected branches progress as stipulated above.

Conflicting consensus messages allow observers to generate slashing evidence
and submit transactions invoking $\mathsf{ApplySlash}$.  
Honest validators follow the prescribed slashing mechanism and submit transactions invoking $\mathsf{ApplySlash}$.
Rational validators may report
evidence or attempt to censor its execution.  
A penalty takes effect only when its transaction executes in a finalized block.

\paragraph{Outcomes.}
Given a strategy profile $s$ and
type profile $t$ induce a distribution over executions $\omega$ during the
wall-clock interval $[0,T]$.  An outcome records messages, finality certificates,
and executed state transitions across all blockchain histories produced by
that time. Reaching an episode
boundary on one or more forks does not end the game before $T$.

We define three events of interest for each validator $i$.
Let $\Attack_i(\omega)$ denote the event that some finalized blockchain history
in $\omega$ has a canonical \textsc{Close} outcome $\mathsf{activate}$
with enrollment set $C$ containing $i$ and associated attack context
$\mathsf{ctx}_h$ in the initial committee episode, and $\omega$ contains
certificates $\varphi,\varphi'$ produced at or before $T$ satisfying
\[
  \mathsf{VfyFinal}(\mathsf{ctx}_h,B_C,\varphi)
  =\mathsf{VfyFinal}(\mathsf{ctx}_h,B'_C,\varphi')=\acc,
  \qquad (B_C,B'_C)=\beta(\mathsf{ctx}_h,C).
\]

Let $\mathsf{Slashed}_i(\omega)$ denote the event that, during $[0,T]$, an
$\mathsf{ApplySlash}$ transition against $i$ executes in at least one
finalized blockchain history and confiscates a positive amount of stake.
Evidence alone, or an unfinalized slashing transaction does not qualify.
Withdrawal on another fork does not erase an executed penalty.

For each successful coordinated attack enrolling $i$, withdrawal must be
completed on both selected branches by $T$.  On each branch, withdrawal is
complete when a finalized block releases $i$'s
full initial stake $s_i$.  The event $\mathsf{ExitFailed}_i(\omega)$ holds
if the withdrawal is not finalized on at least one selected branch by $T$.

All three events are evaluated at $T$.  Events occurring after $T$ do not
affect the game's outcome or utility.

\paragraph{Utility.}
We assume that $\beta(\mathsf{ctx}_h,C)$ selects conflicting blocks whose
successful finalization by $T$ gives each enrollee in $C$ a total gain of
$\epsilon>0$, e.g., through double-spending,
even if that enrollee deviates from $S_C$ or free rides.
This gain is counted once even if several activated outcomes succeed.
If $i$ is slashed or fails to complete the required withdrawals, the loss is
$i$'s initial stake $s_i$, charged once even when both conditions hold or
several finalized histories contain slashing.  Returning the initial stake
adds no separate gain.

Validator $i$'s utility is defined as
\begin{equation}
  \label{eq:utility}
  u_i(\omega)
  =\epsilon\ind\{\Attack_i(\omega)\}
  -s_i\ind\{\mathsf{Slashed}_i(\omega)
    \lor\mathsf{ExitFailed}_i(\omega)\}.
\end{equation}
Given a strategy profile $s$ and type profile $t$, the payoff in $G(t)$ is
\[
  U_i(s;t)=\mathbb E_{\omega\mid s,t}[u_i(\omega)],
\]
where the expectation is over the internal randomness of the validators and
the protocol.

This is a pessimistic terminal valuation of stake.  A penalty executed in
any finalized blockchain history by $T$ incurs the full loss.  A successful
attacker also incurs the full loss if either selected branch lacks a
completed withdrawal by $T$, even if no penalty executes.  In particular,
stalling before the episode boundary does not avoid this loss after a
successful attack.

For simplicity, submitted transactions incur no fees, and capital use has no
direct or opportunity cost beyond the explicit terminal loss convention
above.  A validator enrolled in no successful selected attack has neither
an attack gain nor a withdrawal-related loss, but may still incur a slashing
loss for independent misconduct.


\subsection{Equilibrium Analysis}

\subsubsection{Risk-free attack equilibrium}

For $\widehat{s}_i\in\mathcal S_i^{\rational}$, write
$(\widehat{s}_i,s_{-i})$ for the profile obtained by replacing only $i$'s
rational-type strategy.

\begin{definition}[Ex post Nash equilibrium]\label{def:epne}
A type-contingent profile $s^\star$ is an \emph{ex post Nash equilibrium} if,
for every $t\in\{\honest,\rational\}^n$, every $i\in R(t)$, and every
admissible complete strategy $\widehat{s}_i\in\mathcal S_i^{\rational}$,
\[
  U_i(s^\star;t)
  \ge U_i\bigl((\widehat{s}_i,s_{-i}^\star);t\bigr).
\]
Alternative strategies may be randomized.  Expectations are over the
randomness of the validators and the protocol, with $t$ fixed.  Thus the same
profile is a Nash equilibrium in every $G(t)$, without a prior over types.
\end{definition}

In our model, we define the termination of the game with respect to the 
wall-clock time $T$,
and the outcome of the game is defined with respect to the finalized blockchain history at time $T$.
Therefore, to formally prove that the prescribed strategy profile constitutes a Nash equilibrium,
we assume that the blockchain progresses in a timely manner so that all prescribed transactions can be included and finalized,
which is formulated as follows.

\begin{assumption}[Timely coordinated execution]
\label{ass:timely-execution}
For every type profile $t$, the following hold with probability one.
Under prescribed play, every transaction prescribed by the
coordination strategy executes in a finalized block,
in the prescribed order and within its protocol deadlines,
on every blockchain branch.
All blockchain progress needed to submit and execute these
transactions also occurs early enough for the prescribed
coordination process to complete by $T$.

If coordination activates with coalition $C\subseteq R(t)$
satisfying $W(C)\ge q_{\text{coord}}$,
and every member of $C$ follows its prescribed coordination strategy,
the same continuation guarantees hold for $C$ regardless of
the admissible strategies of validators outside $C$.
\end{assumption}

We remark that this assumption is mild. 
Before activation, the prescribed transactions are ordinary registration and \textsc{Close} calls, which honest validators include under the liveness of $\Pi_{\mathsf{bc}}$.
If the coalition already controls
at least the monopoly threshold, i.e., $W(C)\ge q_{\text{coord}}$,
it is justified for the coalition to complete all prescribed coordination transactions in a timely manner.

\begin{theorem}[Risk-free attack equilibrium]\label{thm:risk-free}
Under \cref{ass:timely-execution}, the prescribed profile
$s^{\mathsf{atk}}$ is an ex post Nash equilibrium.  
Moreover, for every
$t\in\{\honest,\rational\}^n$ and every rational validator $i\in R(t)$,
\[
  U_i(s^{\mathsf{atk}};t)=
  \begin{cases}
    0, & W(R(t))<q_{\mathrm{coord}},\\
    \epsilon, & W(R(t))\ge q_{\mathrm{coord}}.
  \end{cases}
\]
Along this equilibrium, no rational validator is slashed in any finalized
blockchain history during $[0,T]$.  If $R(t)\ne\varnothing$ and
$W(R(t))\ge q_{\mathrm{coord}}$, the selected attack succeeds and every
rational validator completes all required full-stake withdrawals by $T$.
\end{theorem}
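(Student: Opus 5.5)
We split the argument into two parts. First we compute the equilibrium payoffs under $s^{\mathsf{atk}}$. Then we bound the payoff of every unilateral deviation, and because the payoff range is so narrow the deviation bound follows almost immediately. Fix a type profile $t$ and a rational $i\in R(t)$. By \cref{ass:timely-execution}, under prescribed play every rational validator's registration finalizes below $\Delta$ in a common history. Honest validators never register, so the canonical enrollment set is exactly $C=R(t)$.

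\emph{Case $W(R(t))<q_{\mathrm{coord}}$.} The outcome is $\mathsf{abort}$. Every validator then only signs $\mathsf{reg}$ and otherwise follows $\Pi_{\mathsf{bc}}$. By the observation preceding the prescribed strategy, no admissible algorithm can produce evidence accepted by $\mathsf{VfyEv}$, so $\mathsf{Slashed}_i$ fails. Also $\Attack_i$ fails, since no $\mathsf{activate}$ outcome exists in any finalized history, and no withdrawal is required. Hence $U_i=0$.

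\emph{Case $W(R(t))\ge q_{\mathrm{coord}}$.} The outcome is $\mathsf{activate}$ with $C=R(t)$. By \cref{def:monopoly-thresholds}, $S_C$ finalizes both $B_C$ and $B'_C$, giving gain $\epsilon$. The censorship property, applied with $\mathcal T_{\mathsf{out}}$ the confiscating transactions against members of $C$ and $\mathcal T_{\mathsf{in}}$ the withdrawals, ensures that no slashing executes in any finalized history from $\mathsf{ctx}_h$ onward. We also argue that no slashing executed before $h$, because conflicting signatures appear only after activation. \Cref{ass:timely-execution} then gives completed withdrawals on both branches by $T$, so $U_i=\epsilon$. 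These two cases also give the final claims of the theorem: no rational validator is slashed, and on activation the attack succeeds with all withdrawals completed.

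For the deviation bound, the key observation is that $u_i\le\epsilon$ pointwise by \eqref{eq:utility}, since the gain is counted once. So in the second case no deviation can do better. In the first case we must show that every deviation $\widehat s_i$ satisfies $U_i\le 0$, and we prove that $\Attack_i$ fails with probability one under $(\widehat s_i,s^{\mathsf{atk}}_{-i})$. Any activated outcome containing $i$ requires enrollment weight at least $q_{\mathrm{coord}}$. Honest validators never register, and no one can forge registration signatures because admissible strategies exclude forgery. Hence every enrollment set satisfies $C\subseteq R(t)$, so $W(C)\le W(R(t))<q_{\mathrm{coord}}$, and no finalized history can contain $\mathsf{activate}$. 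This holds across all forks, including forks that $i$ itself induces by equivocating. The contract condition is evaluated per history, but in every history the set is a subset of $R(t)$. Therefore $u_i\le 0$ pointwise, and $U_i\le 0=U_i(s^{\mathsf{atk}};t)$.

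I expect the main obstacle to be making the second case airtight against deviations that do not lower the bound but could break the prescribed computation of $U_i(s^{\mathsf{atk}};t)$. Examples are extra \textsc{Close} calls, reordering, and multiple activated histories. Since a deviation only needs to be compared against the bound $\epsilon$, this is not actually needed for the equilibrium claim. It is needed for the payoff formula, where I will rely on \cref{ass:timely-execution} and the first-finalized-outcome rule.

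The subtle point in the first case is ruling out $\mathsf{activate}$ in \emph{some} finalized history when $i$ alone equivocates. Here I will argue that a single deviator below the threshold cannot create registrations for non-registrants, so the weight bound holds history by history. This is the step where the no-forgery restriction of $\mathcal S_{\text{coord}}$ is essential.
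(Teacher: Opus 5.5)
Your proposal is correct and follows essentially the same route as the paper. In both, you compute the payoff in the $\mathsf{abort}$ and $\mathsf{activate}$ cases, then bound deviations pointwise: $u_i\le 0$ below the threshold, because every enrollment set lies inside $R(t)$ and so cannot activate, and $u_i\le\epsilon$ above it. Your subset-of-$R(t)$ argument via unforgeability simply spells out what the paper states in one line.

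One small point to tighten. The censorship guarantee of \cref{def:monopoly-thresholds} only covers the modeled episode, so it does not by itself rule out slashing in every finalized history after $\mathsf{ctx}_h$. For a history whose initial episode boundary finalizes before $T$, you also need the paper's step: the completed full-stake withdrawal leaves the stake unbonded, so no later slashing transition can confiscate a positive amount.
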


\begin{proof}
Fix $t$ and $i\in R(t)$.
If all rational validators follow $s^{\mathsf{atk}}$ and register,
we have $C=R(t)$, since honest validators never register.

Suppose $W(R(t))<q_{\mathrm{coord}}$.  The contract aborts.  No validator
equivocates, registration creates no slashing evidence, and no withdrawal is
required, so $U_i(s^{\mathsf{atk}};t)=0$.  
Under any unilateral admissible deviation, 
the coalition's weight is below $q_{\mathrm{coord}}$, so activation
and a coordinated attack gain are impossible.  The deviator's utility is at
most zero by~\cref{eq:utility}, and its payoff cannot improve.

Suppose instead $W(R(t))\ge q_{\mathrm{coord}}$.
The prescribed $S_C$ witnesses the monopoly property for
$C$ and the selected pair $(B_C,B'_C)$.
By \cref{def:monopoly-thresholds}, with probability one it produces
conflicting finality and excludes every transaction that would confiscate
coalition stake.
This censorship applies in every finalized blockchain history through
finalization of that history's initial episode boundary.
By \cref{ass:timely-execution}, both selected branches reach the
initial episode boundary by $T$.
Moreover, in each blockchain history, a withdrawal request is finalized before the episode closes, and it releases the validator's remaining stake when that boundary is finalized. 
The validator retains its consensus weight until
then, and its departure takes effect at that boundary,
so the selected attack succeeds by $T$.

Before the attack begins, prescribed play produces no slashing evidence.
Thereafter, consider any finalized blockchain history through $T$.
If it remains in the initial episode, censorship prevents slashing
throughout the remaining game.
If its initial episode boundary is finalized by $T$,
\cref{ass:timely-execution} ensures that every coalition member's
withdrawal request executes before that boundary.
Censorship preserves each member's stake until the boundary,
at which the withdrawal releases the validator's full stake.
The prescribed strategy leaves that stake unbonded,
so no later slashing transition can confiscate it in this history.
Thus no rational validator is slashed in any finalized blockchain history
during $[0,T]$.
In particular, every rational validator completes its full-stake withdrawal
on both selected branches by $T$.
Therefore $U_i(s^{\mathsf{atk}};t)=\epsilon$.
\cref{eq:utility} bounds every deviating outcome's utility by
$\epsilon$, so no unilateral strategy, including a randomized one, can
increase the expected payoff.  Both cases establish the equilibrium
inequalities for every type profile and rational validator.
\end{proof}

\begin{remark}[Overwhelming-probability guarantees]
\label{rem:computational-risk-free}
Suppose that the monopoly properties hold jointly with failure probability
at most $\eta_M(\lambda)$,
and the timely-execution guarantees of \cref{ass:timely-execution}
hold with failure probability at most $\eta_T(\lambda)$,
where both functions are negligible.
Use the corresponding computational monopoly threshold
as $q_{\mathrm{coord}}$, and set
$\eta(\lambda)=\eta_M(\lambda)+\eta_T(\lambda)$.

Below the threshold, prescribed play still gives payoff zero,
and no unilateral deviation improves the payoff.
At or above the threshold,
a union bound shows that the selected attack succeeds,
no rational validator is slashed,
and all required withdrawals complete by $T$,
jointly with probability at least $1-\eta(\lambda)$.
On this event, validator $i$ receives utility $\epsilon$;
otherwise, its utility is at least $-s_i$.
Thus
\[
  \epsilon-(\epsilon+s_i)\eta(\lambda)
  \le U_i(s^{\mathsf{atk}};t)\le\epsilon.
\]
Since every deviating outcome has utility at most $\epsilon$,
for every type profile $t$, rational validator $i\in R(t)$,
and admissible unilateral deviation $\widehat s_i$,
\[
  U_i((\widehat s_i,s_{-i}^{\mathsf{atk}});t)
  \le U_i(s^{\mathsf{atk}};t)
    +(\epsilon+s_i)\eta(\lambda).
\]
Consequently, the prescribed profile is an ex post Nash equilibrium
up to a negligible additive error,
provided $\epsilon$ and the stakes are polynomially bounded in $\lambda$.
Above the threshold, participation remains strictly preferable
to honest behavior without registration whenever
$\epsilon>(\epsilon+s_i)\eta(\lambda)$.
For every fixed $\epsilon>0$ and polynomially bounded stakes,
this condition holds for sufficiently large $\lambda$.
\end{remark}

\subsubsection{Strict incentive to join}
We show that once other rational validators coordinate,
an individual validator is incentivized to join the coordination.

\begin{corollary}\label{cor:join-incentive}
Under the assumptions of \cref{thm:risk-free}, fix a type profile
$t\in\{\honest,\rational\}^n$ with $W(R(t))\ge q_{\mathrm{coord}}$ and
$i\in R(t)$.  Let $h_i\in\mathcal S_i^{\rational}$ be the strategy that
follows $\Pi_{\mathsf{bc}}$ and never registers.  
Then \[
  U_i(s^{\mathsf{atk}};t)
  =\epsilon
  >0
  =U_i\bigl((h_i,s_{-i}^{\mathsf{atk}});t\bigr).
  \]
Thus honest behavior is not a best response to $s_{-i}^{\mathsf{atk}}$.
\end{corollary}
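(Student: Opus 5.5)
The plan is to read the left-hand equality directly off \cref{thm:risk-free} and to compute the right-hand side by analysing the deviation $h_i$ on its own terms.

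\textbf{Left-hand side.} Because $W(R(t))\ge q_{\mathrm{coord}}$, \cref{thm:risk-free} gives $U_i(s^{\mathsf{atk}};t)=\epsilon$ immediately.

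\textbf{Right-hand side.} Under $(h_i,s_{-i}^{\mathsf{atk}})$, validator $i$ never calls \textsc{Register}. Honest validators never register either. So in every finalized history the enrollment set satisfies $C\subseteq R(t)\setminus\{i\}$, and in particular $i\notin C$. I would then evaluate the two terms of \cref{eq:utility} in turn.

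\emph{The gain term vanishes.} The event $\Attack_i(\omega)$ requires a canonical $\mathsf{activate}$ outcome whose enrollment set contains $i$. That never happens, so $\ind\{\Attack_i\}=0$ surely.

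\emph{The loss term vanishes.} The event $\mathsf{ExitFailed}_i$ is defined only relative to successful coordinated attacks enrolling $i$. There are none, so it does not hold. For $\mathsf{Slashed}_i$, the step I expect to need the most care is showing that $i$ is never slashed, whether coordination among $R(t)\setminus\{i\}$ activates or aborts. The argument runs as follows:
\begin{itemize}
\item The only accepted evidence under the standard double-signing rule is a pair of $i$'s own signatures on conflicting blocks within a single context.
\item $h_i$ follows $\Pi_{\mathsf{bc}}$, so $i$ signs at most one block per context in each history it participates in.
\item The remaining members may fork the chain. Even so, $i$'s honest votes on one branch do not form a conflicting pair within one context $\mathsf{ctx}_h$. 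I would appeal to the protocol's standard guarantee that honest validators never produce slashable evidence, including under adversarial forking.
\item The initial state contains no prior evidence against $i$.
\item Forgery is excluded from $\mathcal S_{\mathrm{coord}}$, so no admissible strategy of the others can fabricate evidence against $i$.
\end{itemize}
Hence $\mathsf{VfyEv}$ never accepts against $i$, and $\mathsf{ApplySlash}$ never confiscates positive stake. It follows that $u_i(\omega)=0$ on every execution, so $U_i((h_i,s_{-i}^{\mathsf{atk}});t)=0$.

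\textbf{Conclusion.} Combining the two sides gives $\epsilon>0$, so $h_i$ is strictly worse than $s^{\mathsf{atk}}_i$ against $s_{-i}^{\mathsf{atk}}$. Honest behavior is therefore not a best response.

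A side remark: whether the others still activate without $i$ does not matter for this argument. The value $0$ holds in both cases. If activation occurs without $i$, $i$ simply has no gain.
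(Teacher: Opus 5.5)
Your proposal is correct and follows the paper's own argument. Both read $U_i(s^{\mathsf{atk}};t)=\epsilon$ off \cref{thm:risk-free}, then show that under $h_i$ validator $i$ lies outside every enrollment set, so $\Attack_i$ and $\mathsf{ExitFailed}_i$ are false, and that honest play yields no accepted double-signing evidence, so $\mathsf{Slashed}_i$ is false and the payoff is $0$. Your extra justification for non-slashing is simply a more detailed version of the paper's one-line claim. The paper's model also explicitly stipulates that honest validators never produce conflicting authenticated consensus messages, so you can cite that stipulation instead of a ``standard guarantee.''
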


\begin{proof}
Under $h_i$, validator $i$ never signs a registration message,
so $i$ is outside every enrollment set in every finalized
blockchain history, so both $\Attack_i$ and $\mathsf{ExitFailed}_i$ are
false.  Honest protocol behavior yields no accepted slashing evidence against
$i$, so $\mathsf{Slashed}_i$ is also false throughout $[0,T]$.
\cref{eq:utility} therefore gives
$U_i((h_i,s_{-i}^{\mathsf{atk}});t)=0$.  Switching only $i$ to its prescribed
coordination strategy yields $s^{\mathsf{atk}}$, which gives $i$ payoff
$\epsilon>0$ by \cref{thm:risk-free}.
\end{proof}

\section{Extensions of the Coordination Game}\label{sec:variants}

We give two extensions of the basic coordination game,
which demonstrate the feasibility of risk-free coordination among validators
even if the slashing mechanism becomes more general.
In \cref{sec:evidence-rewards}, we reward validators for reporting equivocation.
We show that if validators deposit sufficient collateral, 
the prescribed coordination strategy remains an ex post Nash equilibrium.
In \cref{sec:anon-collateral}, we consider slashing mechanisms that may punish behavior beyond conflicting signatures.
We propose an anonymous enrollment contract that allows validators to register without revealing their identities, 
protecting validators from slashing before actual equivocation.

\subsection{Coordination with Reporting Rewards}\label{sec:evidence-rewards}

Extend the game in \cref{sec:coordination-game} by allowing rewards for
reporting equivocation. For each validator $j$, fix a public reward budget
$0\le b_j\le s_j$. Upon slashing, up to $b_j$ of $j$'s confiscated initial
stake is paid to reporters, and the remainder is burned.
The total reward funded by $j$'s initial stake across all reporters and
blockchain histories is at most $b_j$.
A reward is paid
only when the corresponding slashing transition executes in a finalized
block. 
Withdrawn initial stake cannot fund a later reward in the same history.

\paragraph{Collateralized coordination contract.}
In the collateralized variant, each validator must deposit a public collateral amount $D_i>0$ 
separate from its slashable stake $s_i$ into the smart contract to register.
The contract retains the registry $\mathcal V_{\mathsf{bc}}$, deadline $\Delta$ counted in block height, and threshold
$q_\text{coord}$ of \cref{sec:coordination}. It can also verify records
of executed slashing and full-stake withdrawal in its own finalized
blockchain history. The selected context and conflicting pair
$(B_C,B'_C)=\beta(\mathsf{ctx}_h,C)$ are the same as \cref{sec:coordination}.

\begin{mdframed}
  \begin{center}
    \textbf{Collateralized Coordination Smart Contract}
  \end{center}

\noindent Parameters: 
$\mathcal V_{\mathsf{bc}}$, $\Delta$, $\mathsf{reg}$, $q_\text{coord}$, and collateral amounts $(D_i)_{i\in\players}$. 
Initially, $C\gets \emptyset$,

\begin{description}[font=\normalfont]
  \item[\textsc{Register}$(\mathsf{pk}_i,\sigma_i)$.]
  Accept if the call executes at a block height below $\Delta$,
  the key belongs to $\mathcal V_{\mathsf{bc}}$, $i\notin C$,
  $\mathsf{VfyReg}(\mathsf{pk}_i,\mathsf{reg},\sigma_i)=\acc$, and the call
  deposits exactly $D_i$.
  On acceptance, lock $D_i$ and add $i$ to $C$.

  \item[\textsc{Close}.]
  At a block height at least $\Delta$, the first accepted call closes
  registration and freezes $C$. If $W(C)<q_\text{coord}$, publish
  $\mathsf{abort}$ and return every deposit in this same execution.
  Otherwise, publish $\mathsf{activate}$ and retain all deposits. Later
  calls leave the outcome and balances unchanged.

  \item[\textsc{Settle}$(i,\mathsf{ev}_i)$.]
  After activation, first check whether any $j\in C$ has incurred a
  positive slashing penalty in this blockchain history. If so, burn every
  outstanding deposit, regardless of $i$ and $\mathsf{ev}_i$. Otherwise,
  return $D_i$ to validator $i$ if its deposit is outstanding,
  $\mathsf{ev}_i$ contains its valid signatures on both selected blocks
  $B_C,B'_C$ in $\mathsf{ctx}_h$, and every $j\in C$ has completed
  withdrawal of its full initial stake $s_j$ in this history. In all other
  cases, leave the deposits unchanged.
\end{description}
\end{mdframed}

Both \textsc{Close} and \textsc{Settle} are permissionless. Refunds always
return funds to the enrolled validator, not to the caller. 
Once a coalition member is slashed in a history, the refund condition is
permanently false there, even if nobody calls \textsc{Settle}. Conversely,
after every member has withdrawn its full initial stake without slashing,
that stake can no longer generate a reporting reward in that history.
Consequently, no additional reporting deadline is needed before release.

\paragraph{Prescribed strategy.}
Let $s^{\mathsf{col}}$ denote the following collateralized modification
of the baseline profile $s^{\mathsf{atk}}$.
Each rational validator $i$ follows these steps.
\begin{enumerate}[leftmargin=2.4em]
  \item \textbf{Register.}
  Deposit $D_i$ with the registration call and otherwise follow the
  baseline registration strategy.
  \item \textbf{Close and observe.}
  Follow the baseline \textsc{Close} strategy.
  On abort, the collateral is returned;
  continue to follow $\Pi_{\mathsf{bc}}$.
  The remaining steps apply only upon activation with $i\in C$.
  \item \textbf{Attack and withdraw.}
  Follow $S_{C,i}$ as defined in \cref{sec:coordination},
  including publishing conflicting signatures, requesting withdrawal, 
  censoring slashing transactions, and advancing the blockchain.
  Importantly, do not report coalition members.
  \item \textbf{Settle collateral.}
  Submit \textsc{Settle} calls with $i$'s conflicting signatures once
  the refund conditions hold on each branch.
\end{enumerate}
All other contingencies are handled as in the baseline complete strategy.

\paragraph{Collateral valuation and utility.}
Given a blockchain execution history $\omega$,
let $r_i(\omega)\ge0$ be the total reporting reward received by validator
$i$, including through accounts it controls. 
Define $\mathsf{BondLost}_i(\omega)$ to hold if, at the end of the game
at time $T$, validator $i$'s collateral has not been returned in full
on at least one finalized blockchain branch containing its accepted deposit.
The event is false if no deposit by $i$ is finalized.

Using the baseline utility $u_i$ from~\cref{eq:utility}, define
\begin{equation}\label{eq:reward-collateral-utility}
  u_i^{D,b}(\omega)
  =u_i(\omega)+r_i(\omega)
  -D_i\ind\{\mathsf{BondLost}_i(\omega)\}.
\end{equation}
Write $U_i^{D,b}(s;t)=\mathbb E_{\omega\mid s,t}[u_i^{D,b}(\omega)]$.
Returning collateral adds no gain. 

The baseline registration and timing assumptions,
including \cref{ass:timely-execution}, apply to $s^{\mathsf{col}}$.


\begin{theorem}[Collateralized attack equilibrium with evidence rewards]
\label{thm:reward-equilibrium}
Under \cref{ass:timely-execution},
suppose that, for every validator $i$,
\begin{equation}\label{eq:reward-collateral-bound}
  D_i>0,
  \qquad D_i\ge\sum_{j\in\players\setminus\{i\}}b_j,
\end{equation}
and suppose that the blockchain does not fork at the execution of \textsc{Close}, 
so that the outcome of \textsc{Close} is the common prefix of all possible finalized blockchain branches.

Then, $s^{\mathsf{col}}$ is an ex post Nash equilibrium of the
game with utility as per \labelcref{eq:reward-collateral-utility}. 
Moreover, for every type
profile $t$ and every $i\in R(t)$,
\[
  U_i^{D,b}(s^{\mathsf{col}};t)
  =\begin{cases}
    0, & W(R(t))<q_\text{coord},\\
    \epsilon, & W(R(t))\ge q_\text{coord}.
  \end{cases}
\]
Along prescribed play, every rational validator recovers its collateral
and none is slashed. When $R(t)\ne\emptyset$ and
$W(R(t))\ge q_\text{coord}$, the selected attack succeeds and every
rational validator completes all required full-stake withdrawals by $T$.
\end{theorem}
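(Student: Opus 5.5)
We follow the two-case structure of the proof of \cref{thm:risk-free}, adding the two new terms $r_i(\omega)$ and $-D_i\ind\{\mathsf{BondLost}_i(\omega)\}$ in \cref{eq:reward-collateral-utility}. Fix $t$ and $i\in R(t)$. Under $s^{\mathsf{col}}$ every rational validator registers and honest ones never do, so $C=R(t)$. Since \textsc{Close} executes in the common prefix, the outcome and $C$ are the same on every branch.

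\textbf{Below threshold.} If $W(R(t))<q_{\mathrm{coord}}$, \textsc{Close} aborts and returns all deposits in the same execution, so $\mathsf{BondLost}_i$ is false. Nobody equivocates, so there is no evidence, no reward, and no slashing, giving $U_i^{D,b}(s^{\mathsf{col}};t)=0$. For a unilateral deviation $\widehat s_i$, the enrollment weight is at most $W(R(t))<q_{\mathrm{coord}}$ (honest validators never register), so activation is impossible. Hence $\Attack_i$ fails and, with deposits refunded, $u_i^{D,b}\le r_i$. Any reward $i$ collects requires slashing some $j\neq i$. But honest validators never produce conflicting messages, and the other rational validators follow $\Pi_{\mathsf{bc}}$ after abort, so no evidence against anyone else exists and $r_i=0$. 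If $i$ misbehaves itself, it can only add slashing losses. Thus the deviation payoff is at most $0$.

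\textbf{At or above threshold.} If $W(R(t))\ge q_{\mathrm{coord}}$, the argument of \cref{thm:risk-free}, via \cref{def:monopoly-thresholds} and \cref{ass:timely-execution}, shows that the attack succeeds, all slashing transactions are censored, and every member withdraws its full stake on both branches by $T$. Prescribed play never reports, so $r_i=0$. The \textsc{Settle} refund conditions then hold on every branch containing the deposit, and $i$ submits \textsc{Settle}, so $\mathsf{BondLost}_i$ is false. Therefore $U_i^{D,b}=\epsilon$.

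For a deviation, we split on whether some $j\in C$ is slashed in some finalized history.
\begin{itemize}
\item \emph{No member is slashed.} Then no reward is funded from coalition stake. Validators outside $C$ are honest, or rational but unregistered under $s^{\mathsf{atk}}$ behavior, so they never equivocate and their stake funds no reward either. Hence $r_i=0$, and $u_i^{D,b}\le u_i\le\epsilon$.
\item \emph{Some member is slashed in some history $\mathcal{H}$.} Then \textsc{Settle} can never refund in $\mathcal{H}$. This holds even if $i$ itself avoided activation side effects, since $i\in C$ because the deviation must have registered to alter anything relative to the case $i\notin C$; that sub-case reduces to \cref{cor:join-incentive}-style reasoning with payoff at most $\epsilon$ when $i\notin C$, and the remaining rational members still form the coalition only if weight suffices. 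So $\mathsf{BondLost}_i$ holds, while $r_i\le\sum_{j\ne i}b_j\le D_i$ by the per-$j$ budget cap across all histories and \cref{eq:reward-collateral-bound}. Consequently $u_i^{D,b}\le \epsilon\ind\{\Attack_i\}+r_i-D_i\le\epsilon$.
\end{itemize}
Taking expectations over randomized deviations preserves the bound.

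\textbf{Main obstacle.} The delicate step is the deviation in which $i$ does not register, or registers and then deviates, so that the attack may still activate without it. If $i\notin C$, it has no deposit and $\mathsf{BondLost}_i$ is false, so it could in principle report members and collect up to $\sum_j b_j$ with no offsetting bond. I would handle this by noting that when $i\notin C$, activation requires $W(R(t)\setminus\{i\})\ge q_{\mathrm{coord}}$. By \cref{ass:timely-execution}, that coalition then censors every slashing transaction regardless of outsiders, so no reward executes and $i$'s payoff is $0\le\epsilon$. In every other sub-case, where $i$ registers or activation fails, the case split above applies. Checking that censorship holds ``regardless of validators outside $C$'' uniformly across all histories is the point I expect to need the most care.
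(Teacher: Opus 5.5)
Your case structure matches the paper's argument. Below threshold, nothing can activate. At or above threshold, for $i\in C$ the forfeited deposit absorbs any reward via $D_i\ge\sum_{j\ne i}b_j$, and for $i\notin C$ the activated coalition's censorship under the second clause of \cref{ass:timely-execution} blocks every reward-funding slash.

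\textbf{Gap: self-reporting.} The proposal never accounts for it. A deviating $i$ can sign two conflicting blocks, submit its own evidence, and collect up to $b_i$ from its own confiscated stake. So several of your claims fail:
\begin{itemize}
\item ``Any reward $i$ collects requires slashing some $j\neq i$'' is false.
\item $r_i=0$ is false in the below-threshold case, in the ``no member is slashed'' case with $i\notin C$, and in your $i\notin C$ paragraph, since censorship only protects members of $C$.
\item In the ``some member is slashed'' bullet, $r_i\le\sum_{j\ne i}b_j$ fails when $i$ is itself the slashed member.
\item Your final inequality $u_i^{D,b}\le\epsilon\ind\{\Attack_i\}+r_i-D_i$ drops the $-s_i\ind\{\mathsf{Slashed}_i\}$ term, which is exactly what must absorb that reward.
\end{itemize}
Your remark ``if $i$ misbehaves itself, it can only add slashing losses'' is the step that needs the hypothesis $b_i\le s_i$, and you never invoke it. The paper uses $r_i\le b_i\ind\{\mathsf{Slashed}_i\}$ below threshold. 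Above threshold it uses $r_i\le b_i\ind\{\mathsf{Slashed}_i\}+\bigl(\sum_{j\ne i}b_j\bigr)\ind\{\mathsf{BondLost}_i\}\le s_i\ind\{\mathsf{Slashed}_i\}+D_i\ind\{\mathsf{BondLost}_i\}$. In both cases the self-reward is cancelled by the stake loss.

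\textbf{Smaller point: refund timing.} In the $i\in C$ case, ``some member is slashed in $\mathcal H$, so \textsc{Settle} can never refund in $\mathcal H$'' only rules out refunds \emph{after} the slash. You also need that $i$'s deposit was not already refunded earlier in $\mathcal H$. The paper gets this by arguing in a history that actually pays a reward for slashing $j$. There, a prior refund would require $j$'s full initial stake to have been withdrawn, and withdrawn stake cannot fund a reward. Conditioning on a reward-paying history, rather than on an arbitrary slash, also removes the need for your split on whether some member is slashed.
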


\begin{proof}
Fix $t$ and $i\in R(t)$.
Under $s^{\mathsf{col}}$, prescribed registration gives $C=R(t)$.
Suppose $W(C)<q_{\mathrm{coord}}$.
By \cref{ass:timely-execution}, \textsc{Close} finalizes with an abort
outcome by $T$ on every branch containing an accepted deposit,
returning that deposit.
No validator equivocates, so no slashing or reporting reward arises.
Thus $U_i^{D,b}(s^{\mathsf{col}};t)=0$.

Under any unilateral admissible deviation,
the coalition's weight is always below $q_{\mathrm{coord}}$, so $\Attack_i$ is false.
Since no evidence of earlier violations is available and signatures cannot
be forged, any reward received by $i$ must arise from slashing $i$ itself, 
giving $r_i\le b_i\ind\{\mathsf{Slashed}_i\}$.
Since $b_i\le s_i$ and collateral losses are nonnegative,
the deviator's utility is at most zero.

Suppose instead $W(C)\ge q_{\mathrm{coord}}$.
The argument in the proof of \cref{thm:risk-free} applies to the prescribed
attack continuation: with probability one, the selected attack succeeds,
no rational validator is slashed in any finalized blockchain history during $[0,T]$,
and every rational validator completes its required full-stake withdrawals by $T$.
By \cref{ass:timely-execution}, the prescribed withdrawals and subsequent
eligible \textsc{Settle} calls complete on every blockchain branch by $T$,
returning all collateral.
No reporting reward is paid, so $U_i^{D,b}(s^{\mathsf{col}};t)=\epsilon$.

Now consider any unilateral admissible deviation by $i$.
Suppose $i$ receives a reward for slashing some $j\ne i$.
Unforgeability and the absence of earlier evidence imply that $j$
produced conflicting signatures in a prescribed attack.
Let $C$ now denote that attack's activated enrollment set,
so $j\in C$ and $W(C)\ge q_{\mathrm{coord}}$.

By the common-prefix premise, nondeviating attackers act on the same
finalized \textsc{Close} outcome, and every history paying rewards
for their attack evidence extends that outcome.
If $i\notin C$, every member of $C$
follows the same prescribed continuation.
Monopoly censorship and timely withdrawal then exclude slashing $j$
by the argument in \cref{thm:risk-free}, contradicting the reward.
If $i\in C$, the reward-paying history contains its deposit.
A prior refund would require $j$'s initial stake to have been withdrawn,
precluding the reward.
The penalty therefore permanently prevents refund of the outstanding
deposit, so $\mathsf{BondLost}_i$ holds.

Self-reporting rewards require $\mathsf{Slashed}_i$ and total at most $b_i$.
Thus the reward budgets and
\cref{eq:reward-collateral-bound} give
\begin{equation}\label{eq:reward-deviation-bound}
  \begin{aligned}
    r_i
    &\leq b_i\ind\{\mathsf{Slashed}_i\}
      +\left(\sum_{j\ne i}b_j\right)
        \ind\{\mathsf{BondLost}_i\}\\
    &\leq s_i\ind\{\mathsf{Slashed}_i\}
      +D_i\ind\{\mathsf{BondLost}_i\}.
  \end{aligned}
\end{equation}
Substituting into~\cref{eq:reward-collateral-utility} yields
\begin{align*}
  u_i^{D,b} \leq
  \epsilon\ind\{\Attack_i\} - s_i \ind\{\mathsf{Slashed}_i\} + r_i - D_i  \ind\{\mathsf{BondLost}_i\} \leq \epsilon.
\end{align*}
Taking expectations gives the equilibrium
inequalities for every type profile and rational validator,
including against randomized unilateral deviations.
\end{proof}

\begin{corollary}\label{cor:reward-join-incentive}
Under the assumptions of \cref{thm:reward-equilibrium}, fix a type profile
$t\in\{\honest,\rational\}^n$ with $W(R(t))\ge q_{\mathrm{coord}}$ and
$i\in R(t)$.
Let $h_i\in\mathcal S_i^{\rational}$ be the strategy that
follows $\Pi_{\mathsf{bc}}$ and never registers.
Then
\[
  U_i^{D,b}(s^{\mathsf{col}};t)
  =\epsilon
  >0
  =U_i^{D,b}\bigl((h_i,s_{-i}^{\mathsf{col}});t\bigr).
\]
Thus honest behavior is not a best response to $s_{-i}^{\mathsf{col}}$,
with $i$'s type remaining rational.
\end{corollary}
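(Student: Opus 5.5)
The plan mirrors the proof of \cref{cor:join-incentive}. The left-hand equality $U_i^{D,b}(s^{\mathsf{col}};t)=\epsilon$ follows immediately from \cref{thm:reward-equilibrium}, since $W(R(t))\ge q_{\mathrm{coord}}$. It therefore remains to compute the payoff of the honest deviation $h_i$ against $s_{-i}^{\mathsf{col}}$, and to show that each of the four terms in \cref{eq:reward-collateral-utility} vanishes with probability one.

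The steps, in order, are as follows. First, under $h_i$, validator $i$ never signs $\mathsf{reg}$ and never calls \textsc{Register}. No deposit by $i$ is ever finalized, so $\mathsf{BondLost}_i$ is false by definition. Also $i$ lies outside every enrollment set $C$ in every finalized history, so $\Attack_i$ and $\mathsf{ExitFailed}_i$ are both false; recall that the withdrawal requirement applies only to enrollees of a successful attack. Second, $h_i$ follows $\Pi_{\mathsf{bc}}$, so $i$ never signs conflicting blocks. The paper assumes no prior evidence and excludes forgery, so no evidence against $i$ is accepted by $\mathsf{VfyEv}$, and hence $\mathsf{Slashed}_i$ is false. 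Combining these, the baseline utility $u_i$ equals $0$ pointwise.

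The remaining, and main, step is the reward term. We need $r_i=0$ under $(h_i,s^{\mathsf{col}}_{-i})$, which is subtle because honest validators do submit slashing transactions, and $i$ might collect a reporting reward for slashing some enrolled $j$. To rule this out, argue as in the proof of \cref{thm:reward-equilibrium}. Any reward for slashing $j\neq i$ requires conflicting signatures by $j$, hence an activated set $C\ni j$ with $W(C)\ge q_{\mathrm{coord}}$. Here $i\notin C$, and every member of $C$ follows the prescribed continuation. By the common-prefix premise on \textsc{Close}, all members act on the same outcome. \Cref{def:monopoly-thresholds} (censorship) together with \cref{ass:timely-execution} then ensures that no slashing transaction against any member of $C$ executes in any finalized history before full withdrawal, and after withdrawal no stake remains to fund a reward. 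This contradicts the reward, so $r_i=0$; self-reporting is impossible since $\mathsf{Slashed}_i$ is false.

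Note that $C=R(t)\setminus\{i\}$ need not reach the threshold. If it does not, the contract aborts, no one equivocates, and trivially no rewards arise. If it does, the argument above applies. Either way $u_i^{D,b}=0$ with probability one, giving $U_i^{D,b}((h_i,s_{-i}^{\mathsf{col}});t)=0<\epsilon$, which shows that $h_i$ is not a best response.
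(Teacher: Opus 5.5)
Your proposal is correct and takes essentially the same route as the paper. Non-registration and honest play make $\Attack_i$, $\mathsf{ExitFailed}_i$, $\mathsf{BondLost}_i$ and $\mathsf{Slashed}_i$ false. You then remove the reward term with the same case split on activation, using the common-prefix premise, monopoly censorship and \cref{ass:timely-execution} to exclude any executed slashing of coalition members before their full withdrawal, and you read off the payoff $\epsilon$ under $s^{\mathsf{col}}$ from \cref{thm:reward-equilibrium}.
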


\begin{proof}
Under $h_i$, validator $i$ never registers or deposits collateral,
so $\Attack_i$, $\mathsf{ExitFailed}_i$, and $\mathsf{BondLost}_i$ are false.
Honest protocol behavior also makes $\mathsf{Slashed}_i$ false.

Honest behavior may include reporting other validators' equivocations,
but a reporting reward requires an executed slashing penalty.
If coordination does not activate, no validator equivocates,
so no reporting reward arises.
If coordination activates with coalition $C$, then $i\notin C$ and
$W(C)\ge q_{\mathrm{coord}}$.
Every member of $C$ follows the prescribed strategy,
and the common-prefix premise ensures that they act on the same
finalized \textsc{Close} outcome.
By \cref{def:monopoly-thresholds,ass:timely-execution}, coalition censorship
prevents slashing until full withdrawal in each finalized blockchain history.
After withdrawal, that initial stake cannot fund a reporting reward.
Validators outside $C$ follow the honest protocol and do not equivocate.
Thus even if $i$ reports coalition members, no rewarded slashing penalty
executes, and $r_i=0$ with probability one.

Thus \cref{eq:reward-collateral-utility} gives
$U_i^{D,b}((h_i,s_{-i}^{\mathsf{col}});t)=0$.
Switching only $i$ to its prescribed coordination strategy yields
$s^{\mathsf{col}}$, which gives $i$ payoff $\epsilon>0$
by \cref{thm:reward-equilibrium}.
\end{proof}

\subsection{Anonymous Enrollment under General Slashing Rules}
\label{sec:anon-collateral}

In \cref{sec:basic-coordination-game}, slashing evidence consists of a validator's signatures on two conflicting blocks.
A more aggressive mechanism might also recognize other evidence, such as deploying a coordination
contract or registering an intention to equivocate as evidence.
In this section, we consider any slashing mechanism satisfies a natural property called \emph{honest-stake safety}:
it must not permit accepted evidence against an honest validator.

We introduce a coordination protocol with anonymous enrollment. 
Before activation, validator processes follow the honest blockchain protocol, 
while enrollment occurs through user accounts that are unlinkable to validator identities. 
Even if the mechanism considers enrollment punishable, 
its available information is consistent with the target validator being honest and owning no enrollment accounts, 
while other validators operate those accounts.
Thus, attributing enrollment to that validator could potentially punish an honest validator, violating honest-stake safety.
We present the formal definition as follows.

\begin{definition}[Honest-stake safety]
\label{def:honest-stake-safety}
  A slashing mechanism satisfies \emph{honest-stake safety} if,
  for every coalition $X\subseteq\players$, every probabilistic polynomial-time
  strategy of $X$, and every validator $i\notin X$ that follows
  $\Pi_{\mathsf{bc}}$, 
  $X$ does not output a pair
  $(\mathsf{ctx}_h,\mathsf{ev})$ such that $\mathsf{ctx}_h$ is reached in the
  resulting execution and
  $\mathsf{VfyEv}(\mathsf{ctx}_h,i,\mathsf{ev})=\acc$.
\end{definition}

\subsubsection{Coordination Protocol}
\label{sec:anonymous-deposit-protocol}

We retain the committee, registry, deadlines, admissible complete strategies,
and no transaction fee conventions as \cref{sec:basic-coordination-game},
with no reporting rewards or evidence of earlier violations.
For the sake of analyzing anonymity,
we assume all validators have equal weights, $w_i=1/n$, and set $m=\lceil nq_{\mathrm{coord}}\rceil$.

A rational validator can create and use multiple user accounts.
We assume those accounts are generated and managed independently of the validator's identity.
Honest validators never create those additional user accounts.




\paragraph{The smart contract.}
The parameters $\mathcal V_{\mathsf{bc}}$, $\mathsf{reg}$, $\Delta$, and $D$
are defined as in \cref{sec:evidence-rewards}.
A settlement certificate $\mathsf{cert}$ is called \emph{valid} for account $a$ and validator $i$
if $\mathsf{cert}$ contains $i$'s valid signatures on both selected blocks,
and refund authorizations signed by both $a$ and $i$.
The signatures by both $a$ and $i$ establish the link between the user $a$ and the validator $i$.
The contract keeps track of the set of enrolled accounts $A$ and the set of validators $F$ that have used their refund entitlements.

\begin{mdframed}
  \begin{center}
    \textbf{Anonymous Deposit Coordination Contract}
  \end{center}

  \noindent Parameters:
  $\mathcal V_{\mathsf{bc}}$, $\mathsf{reg}$, $\Delta$, $m$, and $D$.
  Initially, $A\gets\emptyset$ and $F\gets\emptyset$.

  \begin{description}[font=\normalfont]
    \item[\textsc{Register}$(a)$.]
    Accept a call from account $a$ if it executes below
    height $\Delta$, $a\notin A$, and transfers exactly $D$
    from $a$ to the contract.
    Lock the deposit and add $a$ to $A$.
    Invalid, duplicate, or late calls leave enrollment unchanged.

    \item[\textsc{Close}.]
    At height at least $\Delta$, the first accepted call freezes $A$.
    If $|A|<m$, publish $\mathsf{abort}$ and return every deposit
    to its original account.
    Otherwise, publish $\mathsf{activate}$ and retain the deposits.
    Earlier or repeated calls leave the outcome and balances unchanged.

    \item[\textsc{Settle}$(a,i,\mathsf{cert})$.]
    After activation, accept if $a\in A$ has an outstanding deposit,
    $i \in \players$,
    $i \notin F$,
    $\mathsf{cert}$ is valid for $a$ and $i$,
    and withdrawal of $i$'s full initial stake $s_i$ has completed in this
    blockchain history.
    Return $D$ only to $a$, mark its deposit returned,
    and add $i$ to $F$.
    Otherwise, leave the state unchanged.
  \end{description}
\end{mdframed}

Both \textsc{Close} and \textsc{Settle} are permissionless.
After activation, deposits that fail to satisfy \textsc{Settle} remain
locked and count as unreturned collateral at $T$.
Since one player may fund many accounts,
even $|A|=n$ does not certify participation by every validator.

Retain the common-prefix premise of \cref{thm:reward-equilibrium}:
all finalized \textsc{Close} outcomes agree on one canonical call,
status, and frozen enrollment set $A$,
including under unilateral deviations.
This permits subsequent forks and imposes no transaction-finalization
guarantee under deviations.

\paragraph{Prescribed strategy.}
For a feasible context $\mathsf{ctx}_h$ and a set $A$ of anonymous
user accounts, let the deterministic equivocation builder select
\[
  \beta^{\mathsf{ano}}(\mathsf{ctx}_h,A)
    =(B_{A},B'_{A}),
  \qquad B_{A}\perp_h B'_{A}.
\]
Finalizing both selected blocks gives each account in $A$ a total gain $\epsilon$;
accounts outside $A$ receive no modeled attack gain.
For each feasible context, selected pair, and set of registered signers
whose total weight is at least $q_{\mathrm{coord}}$,
fix publicly an admissible continuation witnessing the monopoly property
of \cref{def:monopoly-thresholds}.
The prescribed continuation identifies its participants at the first local
observation of valid signature pairs on the selected blocks
from at least $|A|$ distinct registered validators.
It fixes the first $|A|$ observed signers in registry order as its participants.
A validator executes its component only if selected;
otherwise, it waits subject to the prescribed censorship.

Let $s^{\mathsf{ano}}$ modify the basic strategy in \cref{sec:coordination}
as follows for each rational validator $i$.
Before observing finalized activation, the validator process follows
$\Pi_{\mathsf{bc}}$ exactly.
Honest validators retain their honest strategy.
\begin{enumerate}[leftmargin=2.4em]
  \item \textbf{Register.}
  Generate a fresh user account $a_i$ and submit
  $\textsc{Register}(a_i)$ with deposit $D$.

  \item \textbf{Close and observe.}
  At a block height at least $\Delta$, call \textsc{Close} if no finalized
  outcome has been observed, and follow $\Pi_{\mathsf{bc}}$ while awaiting it.
  After abort, or activation with $a_i\notin A$,
  follow the honest protocol without revealing the account-to-validator link.
  Upon observing activation with $a_i\in A$, immediately begin censoring
  every transaction that would slash a validator in $\mathcal V_{\mathsf{bc}}$.
  Maintain this censorship throughout the episode.

  \item \textbf{Prepare and sign.}
  Let $h$ be the earliest block height following the finalized
  \textsc{Close} call at which the coordinated attack can begin.
  Compute $\beta^{\mathsf{ano}}(\mathsf{ctx}_h,A)$ and publish
  signatures on both selected blocks under $\mathsf{pk}_i$.

  \item \textbf{Withdraw and recover collateral.}
  Request withdrawal in every finalized blockchain history
  in which $i$'s stake remains bonded, including both selected branches.
  Continue consensus participation to finalize the selected pair
  and advance each history until withdrawal takes effect,
  while maintaining transaction inclusion and the prescribed censorship.
  Once the attack succeeds and the settlement conditions hold,
  sign the refund authorization under both $a_i$ and $\mathsf{pk}_i$ and submit
  $\textsc{Settle}(a_i,i,\mathsf{cert})$ on each branch containing the deposit.
  Authorize only this account's refund;
  never transfer the account or authorize another player's refund.
\end{enumerate}
While waiting for data, a context, or signatures,
follow $\Pi_{\mathsf{bc}}$ subject to the prescribed censorship.
Missing or invalid inputs prevent the corresponding next step,
and waiting may persist through $T$.

\paragraph{Timing.}
The prescribed-play clause of \cref{ass:timely-execution} applies to
$s^{\mathsf{ano}}$.
It covers registration, \textsc{Close}, withdrawal requests, and eligible
\textsc{Settle} calls on every blockchain branch,
together with the blockchain progress needed to complete these steps by $T$.
In addition, whenever coordination activates under prescribed play,
all rational validators use the same feasible context $\mathsf{ctx}_h$
to compute $\beta^{\mathsf{ano}}(\mathsf{ctx}_h,A)$
and receive every rational validator's signature pair on the selected blocks
in time for the prescribed withdrawal requests to execute before the
initial episode boundary.

\paragraph{Unlinkability.}
We assume \emph{unlinkability} for prescribed enrollment as follows.
Fix a rational validator $i$ following the prescribed enrollment strategy.
Consider an alternative execution in which $i$ is honest and owns no
enrollment accounts, while other validators fund and operate the same
account clients without access to $\mathsf{sk}_i$.
Before activation, the joint distribution of public observations,
validator inputs, and message delivery is identical in both executions.

\paragraph{Utility.}
Each validator and its user accounts form one strategic player with a shared utility.
Adapt $\Attack_i$ from \cref{sec:coordination-game} to the selected pair
$\beta^{\mathsf{ano}}(\mathsf{ctx}_h,A)$,
with an account of $i$ in the activated set $A$ in place of
validator enrollment.
Let $d_i(\omega)$ count $i$'s distinct accounts benefiting from at least
one successful selected attack, each counted once.
Thus $d_i=0$ if $\Attack_i$ is false;
otherwise it counts all of $i$'s accounts in the common $A$.
Retain $\mathsf{Slashed}_i$ and $\mathsf{ExitFailed}_i$ from
\cref{sec:coordination-game}, requiring full-stake withdrawal on both
branches of every successful selected attack benefiting an account of $i$.

Extend the collateral valuation of \cref{sec:evidence-rewards} to
multiple deposits:
let $L_i(\omega)$ be the largest total amount of $i$'s collateral
remaining unreturned on any terminal finalized blockchain branch at time $T$.
Only accepted deposits in this contract instance count.
Locked collateral counts as unreturned,
and $L_i(\omega)=0$ if none of $i$'s deposits finalized.
Define
\begin{equation}\label{eq:anonymous-collateral-utility}
  u_i^D(\omega)
  =\epsilon d_i(\omega)
  -s_i\ind\{\mathsf{Slashed}_i(\omega)
    \lor\mathsf{ExitFailed}_i(\omega)\}
  -L_i(\omega).
\end{equation}
Write $U_i^D(s;t)=\mathbb E_{\omega\mid s,t}[u_i^D(\omega)]$.
Returning collateral adds no gain.

\subsubsection{Equilibrium Analysis}

\begin{theorem}[Anonymous deposit equilibrium]
\label{thm:anonymous-deposit-equilibrium}
  Under \cref{ass:timely-execution}, suppose the slashing mechanism satisfies honest-stake safety,
  and the collateral bound satisfies
  \begin{equation}\label{eq:anonymous-deposit-collateral-bound}
    D\ge2\epsilon.
  \end{equation}
  Then, the strategy profile $s^{\mathsf{ano}}$ is an ex post Nash equilibrium.
  Moreover, for every type profile $t$ and rational validator $i\in R(t)$,
  \[
    U_i^D(s^{\mathsf{ano}};t)=
    \begin{cases}
      0, & |R(t)|<m,\\
      \epsilon, & |R(t)|\ge m.
    \end{cases}
  \]
  Every prescribed deposit is returned by $T$.
  When $R(t)\ne\varnothing$ and $|R(t)|\ge m$, the selected attack succeeds,
  no rational validator is slashed, and all required full-stake withdrawals
  complete by $T$.
\end{theorem}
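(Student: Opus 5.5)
We mirror the two-case structure of the proofs of \cref{thm:risk-free,thm:reward-equilibrium}. First fix $t$ and $i\in R(t)$. Under $s^{\mathsf{ano}}$ every rational validator registers exactly one fresh account and honest validators create none, so the frozen set satisfies $|A|=|R(t)|$. Before observing activation, every validator process follows $\Pi_{\mathsf{bc}}$ exactly. We then use unlinkability together with honest-stake safety. If some coalition could output accepted evidence against a prescribed-enrolling rational $i$ from pre-activation observations, then by unlinkability the same output would have the same distribution in the alternative execution where $i$ is honest. That would violate \cref{def:honest-stake-safety}. Hence, before activation, no accepted evidence exists against any enrolled validator, and this is the \emph{risk-free registration} claim that replaces the signature-only argument of \cref{sec:coordination}.

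\textbf{Case $|R(t)|<m$.} \textsc{Close} aborts, and by \cref{ass:timely-execution} every deposit is returned. No one equivocates, and by the argument above no slashing occurs, so the payoff is $0$. For deviations, consider any unilateral deviation by $i$ that creates $k\ge 1$ accounts. It can raise $|A|$ to $|R(t)|-1+k$ and might reach $m$, so activation is no longer impossible. This is the new feature relative to \cref{thm:risk-free}, and I expect it to be the main obstacle. The plan is to bound $i$'s gain from activation by the collateral it loses. \textsc{Settle} refunds an account only against a certificate co-signed by a distinct validator $j\notin F$. Each validator's refund entitlement is used at most once, and prescribed players authorize only their own account. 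Therefore at most one of $i$'s accounts, namely the one linked to $i$ itself, can be refunded. We then get $L_i\ge (k-1)D$ on some terminal branch, while $d_i\le k$. Using $D\ge 2\epsilon$ from \cref{eq:anonymous-deposit-collateral-bound}, this gives $\epsilon d_i - L_i\le \epsilon k-(k-1)2\epsilon = \epsilon(2-k)$. The bound is at most $0$ when $k\ge 2$. When $k\le 1$, the count $|A|\le|R(t)|<m$, so there is no activation and the utility is at most $0$. The case $k\ge2$ needs some care: if activation occurs with fewer than $|A|$ distinct signers, the continuation never starts, $\Attack_i$ fails, and every deposit stays locked, so the bound is even easier. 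The extra slack in $2\epsilon$ is what covers the own-account refund being possibly unavailable, for instance if the continuation stalls.

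\textbf{Case $|R(t)|\ge m$.} Activation occurs with $A$ equal to the prescribed accounts. Every rational validator signs the selected pair, and the first $|A|$ signers are exactly $R(t)$, whose weight is at least $q_{\mathrm{coord}}$. The argument of \cref{thm:risk-free}, using the monopoly witness and \cref{ass:timely-execution}, then gives success, no slashing, and full withdrawal on both branches. Eligible \textsc{Settle} calls return every deposit, so the payoff is $\epsilon$ with $d_i=1$. For deviations, $k$ accounts give $d_i\le k$, and by the same one-refund-per-validator counting we have $L_i\ge (k-1)D$ whenever the attack succeeds. Hence $u_i^D\le \epsilon k-(k-1)2\epsilon\le\epsilon$. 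If the attack does not succeed, $d_i=0$ and $u_i^D\le 0$. Either way the utility is at most $\epsilon$, and taking expectations handles randomized deviations. The step I would check most carefully is whether a deviator could get a second refund through another validator's entitlement. Prescribed players never authorize foreign refunds and signatures are unforgeable, so the only linkable entitlement is $i$'s own. This makes the counting $L_i\ge(k-1)D$ hold on every branch where the attack succeeded, since the common-prefix premise fixes one $A$.
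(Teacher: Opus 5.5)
Your proposal is correct and follows essentially the same route as the paper: unlinkability plus honest-stake safety rule out slashing before activation and after an abort, the one-refund-per-validator-key counting gives $L_i\ge(d_i-1)D$ on every branch extending the common activated outcome, and the above-threshold case reduces to the argument of \cref{thm:risk-free}. One small correction to your commentary: the factor $2$ in $D\ge2\epsilon$ is not slack for a stalled own-account refund, since a missing refund only increases $L_i$ and makes the bound easier; it is tight for the $k=2$ deviation when $|R(t)|=m-1$, where $2\epsilon-D\le0$ requires exactly $D\ge2\epsilon$.
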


\begin{proof}
  Fix $t$ and $i\in R(t)$.
  Under prescribed play, each rational validator enrolls one account
  and honest validators enroll none, so $|A|=|R(t)|$.

  Suppose $|R(t)|<m$.
  Coordination never activates,
  so every validator process follows $\Pi_{\mathsf{bc}}$ throughout $[0,T]$.
  By unlinkability, the observable execution has the same distribution
  if other validators operate the enrollment accounts
  and $i$ is honest and owns none.
  An executed slashing penalty against $i$ would therefore expose accepted
  evidence against an honest validator with the same probability,
  contrary to honest-stake safety.
  Thus $i$ does not get slashed.
  The contract aborts and returns every deposit by $T$;
  no attack gain or withdrawal requirement arises,
  so $U_i^D(s^{\mathsf{ano}};t)=0$.

  Under a unilateral deviation by $i$, the other rational validators enroll at
  most $|R(t)|-1$ accounts, and honest validators enroll none.
  Any attack gain requires activation and hence
  $d_i\ge m-|R(t)|+1\ge2$.
  Each validator key permits at most one refund per branch.
  Honest validators provide no refund authorizations,
  and other rational validators authorize only their own accounts' refunds.
  Hence at least $(d_i-1)D$ of $i$'s collateral remains unreturned
  on each branch extending the common activated outcome.
  Thus
  \begin{equation}\label{eq:anonymous-account-deviation-bound}
    u_i^D\le d_i\epsilon-(d_i-1)D.
  \end{equation}
  Since $D\ge2\epsilon$ and $d_i\ge2$,
  this utility is at most $(2-d_i)\epsilon\le0$.
  A deviation without an attack gain also gives utility at most zero,
  so no unilateral deviation improves the payoff.

  Suppose instead $|R(t)|\ge m$.
  Timely registration activates the contract with $|A|=|R(t)|$.
  By the immediate-observation convention of \cref{sec:coordination},
  all rational validators observe activation and begin the prescribed
  censorship before publishing attack signatures.
  Their total weight satisfies
  $W(R(t))=|R(t)|/n\ge q_{\mathrm{coord}}$,
  and they maintain censorship throughout the initial episode,
  including while collecting signatures.
  Before activation, the same unlinkability argument as above excludes slashing.
  Under prescribed play, exactly the rational validators publish valid
  signature pairs on the selected blocks.
  The timing assumptions give them the same context and selected pair,
  and each receives pairs from all $|R(t)|=|A|$ rational validators.
  Thus every rational validator selects the same signers
  and executes its component of the same fixed monopoly continuation.
  Their total weight meets $q_{\mathrm{coord}}$,
  so \cref{def:monopoly-thresholds} guarantees that this continuation
  finalizes the selected pair while maintaining censorship.
  As in the proof of \cref{thm:risk-free},
  timely withdrawal releases each rational validator's full stake on both selected
  branches and every other finalized history reaching the episode boundary.
  Histories remaining in the initial episode are protected by censorship;
  after withdrawal, the stake remains unbonded and cannot be slashed.
  Timely settlement returns every deposit on each branch containing it.
  Each rational validator benefits through one account.
  Since all these statements hold,
  $U_i^D(s^{\mathsf{ano}};t)=\epsilon$.

  The refund bound~\eqref{eq:anonymous-account-deviation-bound} applies
  equally to deviations in this case.
  With an attack gain, utility is at most $\epsilon$ when $d_i=1$
  and at most zero when $d_i\ge2$;
  without an attack gain, utility is at most zero.
  Hence no unilateral deviation improves the payoff.
  Taking expectations in both cases covers randomized deviations and
  establishes the ex post Nash inequalities of \cref{def:epne}
  for every type profile.
\end{proof}

\begin{corollary}[Strict incentive to join]
\label{cor:anonymous-join-incentive}
  Under the assumptions of \cref{thm:anonymous-deposit-equilibrium},
  fix a type profile $t$ with $|R(t)|\ge m$ and $i\in R(t)$.
  Let $h_i\in\mathcal S_i^{\rational}$ be the strategy that follows
  $\Pi_{\mathsf{bc}}$, creates no enrollment accounts,
  and never registers or deposits collateral.
  Then
  \[
    U_i^D(s^{\mathsf{ano}};t)
    =\epsilon
    >0
    =U_i^D\bigl((h_i,s_{-i}^{\mathsf{ano}});t\bigr).
  \]
  Thus honest behavior is not a best response to $s_{-i}^{\mathsf{ano}}$,
  with $i$'s type remaining rational.
\end{corollary}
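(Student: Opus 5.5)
The plan mirrors the proofs of \cref{cor:join-incentive} and \cref{cor:reward-join-incentive}: compute the deviation payoff under $h_i$ directly from \cref{eq:anonymous-collateral-utility}, and obtain the equilibrium payoff from \cref{thm:anonymous-deposit-equilibrium}.

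First, under $(h_i,s_{-i}^{\mathsf{ano}})$, validator $i$ creates no enrollment accounts and makes no deposit. Hence no account of $i$ lies in any activated $A$, so $d_i=0$ and $\Attack_i$ is false. No successful attack benefits an account of $i$, so no withdrawal is required of $i$ and $\mathsf{ExitFailed}_i$ is false. No deposit of $i$ is ever finalized, so $L_i=0$.

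Second, and this is the step needing care, I will show $\mathsf{Slashed}_i$ is false with probability one. Validator $i$ follows $\Pi_{\mathsf{bc}}$ throughout, so it is an honest-behaving validator outside the coalition $X=\players\setminus\{i\}$. The other players' strategies (the prescribed $s^{\mathsf{ano}}$ for rational validators and the honest strategy for honest ones) are admissible probabilistic polynomial-time strategies, since $\mathcal S_{\text{coord}}$ is polynomial-time and excludes forgery. By \cref{def:honest-stake-safety}, $X$ cannot output any $(\mathsf{ctx}_h,\mathsf{ev})$ with $\mathsf{ctx}_h$ reached and $\mathsf{VfyEv}(\mathsf{ctx}_h,i,\mathsf{ev})=\acc$. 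Because $\mathsf{ApplySlash}$ leaves the account unchanged on rejected evidence, no executed transition can confiscate $i$'s stake in any finalized history. The same holds for evidence that $i$'s own honest process might submit, since honest slashing routines only report accepted evidence, which cannot exist against $i$. Here the model assumes no earlier evidence and no reporting rewards, so nothing else enters $u_i^D$. One subtlety I will address explicitly is that the theorem's censorship by the coalition targets slashing transactions generally. This only helps $i$ and plays no role, because the bound already follows from the absence of accepted evidence. Thus $U_i^D((h_i,s_{-i}^{\mathsf{ano}});t)=0$.

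Finally, switching only $i$ back to its prescribed strategy yields $s^{\mathsf{ano}}$. Since $|R(t)|\ge m$, \cref{thm:anonymous-deposit-equilibrium} gives $U_i^D(s^{\mathsf{ano}};t)=\epsilon>0$. Hence $h_i$ is strictly worse and is not a best response. I expect the main obstacle to be the second step: it requires checking that honest-stake safety applies to the coalition of all other players under the prescribed profile, that is, that their combined strategy is a single polynomial-time strategy of $X$ in the sense of the definition. Everything else is bookkeeping.
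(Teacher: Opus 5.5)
Your proposal is correct and follows essentially the same route as the paper. Under $h_i$ there is no enrolled account, deposit, or withdrawal requirement, and honest-stake safety rules out accepted evidence (hence any executed penalty) against $i$, so the deviation payoff is $0$; the payoff $\epsilon$ then comes from \cref{thm:anonymous-deposit-equilibrium}, and your explicit check that the other players jointly form an admissible polynomial-time coalition $X=\players\setminus\{i\}$ is a detail the paper leaves implicit.
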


\begin{proof}
  Under $h_i$, validator $i$ has no enrolled account or deposited collateral,
  so no attack gain, collateral loss, or withdrawal requirement arises.
  Since $i$ follows $\Pi_{\mathsf{bc}}$,
  honest-stake safety excludes accepted evidence against $i$
  available to the other validators or published by $i$.
  Hence no positive slashing penalty against $i$ executes in a finalized
  blockchain history, and $\mathsf{Slashed}_i$ is false.
  By \cref{eq:anonymous-collateral-utility},
  $U_i^D((h_i,s_{-i}^{\mathsf{ano}});t)=0$.
  Switching only $i$ to its prescribed coordination strategy yields
  $s^{\mathsf{ano}}$, which gives $i$ payoff $\epsilon>0$
  by \cref{thm:anonymous-deposit-equilibrium}.
\end{proof}

\bibliographystyle{alpha}
\bibliography{refs}




\end{document}